\theoremstyle{plain}
\newtheorem{definition}{Definition}
\theoremstyle{plain}
\newtheorem{lemma}{Lemma}
\theoremstyle{plain}
\newtheorem{theorem}{Theorem}
\theoremstyle{plain}
\newtheorem{example}{Example}
\theoremstyle{plain}
\newtheorem{fact}{Fact}
\theoremstyle{plain}
\newtheorem{rfact}{Fact}
\theoremstyle{plain}
\newtheorem{rlemma}{Lemma}
\theoremstyle{plain}
\newtheorem{rtheorem}{Theorem}
\newcommand{\PL}{\mathsf{PL}}
\newcommand{\ConSHNBT}{\mathsf{ConSHN\text{-}BT}}
\newcommand{\XY}{\mathsf{XY}}
\newcommand{\ConXXYY}{\mathsf{Con}\text{-}\vec{\mathsf{X}}\vec{\mathsf{Y}}}
\newcommand{\NXXYY}{\mathsf{N}\text{-}\vec{\mathsf{X}}\vec{\mathsf{Y}}}
\newcommand{\OneConXXYY}{\mathsf{OneCon}\text{-}\vec{\mathsf{X}}\vec{\mathsf{Y}}}
\newcommand{\OneBXXYY}{\mathsf{One}\Box\text{-}\vec{\mathsf{X}}\vec{\mathsf{Y}}}
\newcommand{\AP}{\mathsf{AP}}
\newcommand{\XXX}{\mathtt{X}}
\newcommand{\YYY}{\mathtt{Y}}
\newcommand{\con}[1]{[#1]}
\newcommand{\dcon}[1]{\langle #1 \rangle}
\newcommand{\MM}{\mathtt{M}}
\newcommand{\TL}[1]{\mathtt{TL}(#1)}
\newcommand{\CC}{\mathtt{C}}
\newcommand{\rul}{R}
\newcommand{\AT}[1]{\mathtt{AT}(#1)}
\newcommand{\EN}[1]{\bigcap #1}
\newcommand{\upl}[2]{#1 \cup \{#2\}}
\newcommand{\golf}[2]{|#1|_{#2}}
\newcommand{\upf}[3]{#1 +_{#3} #2}
\newcommand{\exs}[1]{\textit{#1}}
\newcommand{\ecs}[1]{{\color{blue}(#1)}}
\newcommand{\red}[1]{{\color{red}#1}}
\newcommand{\fcut}[1]{}
\newcommand{\bee}{\begin{exe}}
\newcommand{\eee}{\end{exe}}
\newcommand{\bi}{\begin{itemize}}
\newcommand{\ei}{\end{itemize}}
\newcommand{\defs}{\textbf}
\title{Logic for conditional strong historical necessity in branching time and analyses of an argument for future determinism}
\author{
Fengkui Ju \smallskip \\
{\small School of Philosophy, Beijing Normal University, Beijing, China} \\
{\small \emph{fengkui.ju@bnu.edu.cn}}
}
\date{}
\begin{document}

\maketitle

\begin{abstract}
\noindent In this paper, we present a logic for conditional strong historical necessity in branching time and apply it to analyze a nontheological version of Lavenham's argument for future determinism. Strong historical necessity is motivated from a linguistical perspective, and an example of it is ``If I had not gotten away, I must have been dead''.
The approach of the logic is as follows. The agent accepts ontic rules concerning how the world evolves over time. She takes some rules as indefeasible, which determine acceptable timelines. When evaluating a sentence with conditional strong historical necessity, we introduce its antecedent as an indefeasible ontic rule and then check whether its consequent holds for all acceptable timelines.
The argument is not sound by the logic.

\medskip

\noindent \textbf{Keywords:} indefeasible ontic rules; contexts; acceptable timelines; update; a nontheological version of Lavenham's argument for future determinism
\end{abstract}

\section{Introduction}
\label{sec:Introduction}

\emph{Necessities} are universal quantifications over domains of possibilities. There are many necessities in the literature: \emph{logical}, \emph{metaphysical}, \emph{natural}, \emph{deontic}, \emph{epistemic}, and so forth. These notions are studied in different contexts: logical, philosophical, ethical, legal, linguistical, and so forth. These notions may have different meanings in different works. We refer to \cite{sep-modality-varieties} and \cite{portner_modality_2009} for general discussions about them.
In this paper, we propose a logic for a conditional necessity in branching time, called \emph{conditional strong historical necessity}, and use the logic to analyze an argument for future determinism.

The necessity is motivated mainly from a linguistical perspective.
The argument, from \cite{ohrstrom_future_2020}, is a nontheological version of an argument given by the scholastic philosopher Richard of Lavenham in his \emph{De eventu futurorum}. Lavenham's argument is close to Aristotle's sea battle puzzle, given in his \emph{On Interpretation}. By \cite{ohrstrom_future_2020}, \cite{ohrstrom_temporal_1995} and \cite{jarmuzek_sea_2018}, it is also close to the Master argument, given by the Greek philosopher Diodorus Cronus.

\subsection{Strong historical necessity}
\label{subsec:Strong historical necessity}

\emph{Historical possibilities} are possible states that \emph{we think} our world is in, or could be in, if things had gone differently in the past. Historical possibilities can be easily confused with \emph{epistemic possibilities}. An epistemic possibility is a possible state that we think is possible to be the actual state of the world. As an example, suppose Adam bought a lottery ticket yesterday, the winning number was just revealed, and Adam loses. Assume Bob knows that Adam loses. Then, the possibility that Adam wins is historical but not epistemic for Bob. Assume Bob does not know that Adam loses. Then, the possibility that Adam wins is both historical and epistemic for Bob.

Epistemic possibilities are relative to agents: it can happen that one agent thinks that a possible state might be the actual state of the world, but another agent does not think so. Historical possibilities are relative to agents, too. For example, people often disagree about whether an ancient dynasty could have a different fate than its actual fate.

Strong historical necessity is the historical necessity that can be expressed by ``must'' and ``necessary'' in English. Its dual, the \emph{strong historical possibility}, can be expressed by ``might'' and ``could''. In Chinese, strong historical necessity can be expressed by ``biding'', ``yiding'', and ``biran''.

We consider some examples of strong historical necessity:

\begin{exe}
\ex A kindergarten organizes a lottery game. For some reason, Adam and Bob exchange their tickets. In the end, Adam wins. After the draw, the kindergarten leader says the following: \par \vspace{2pt} \exs{The winner could be anyone else and it is not that Adam \textbf{must} be the winner.} \label{ex:Adam must be the winner}
\ex \exs{Our guests are home by now. In fact, they \textbf{must} be home by now. They left half-an-hour ago, have a fast car, and live only a few miles away.} \label{ex:they must be home by now} \ecs{Adapted from \cite{leech_meaning_1971}}
\ex \exs{My favorite number is \textbf{necessarily} even.} \label{ex:My favorite number is necessarily even} \ecs{From \cite{thomasson_norms_2020}}
\end{exe}

\noindent In the scenario of \ref{ex:Adam must be the winner}, it is certain that Adam is the winner. However, it is still felicitous to negate ``Adam must be the winner''. This means that ``must'' in \ref{ex:Adam must be the winner} is not epistemic. ``Must'' in \ref{ex:they must be home by now} is not epistemic either; otherwise, the sentence ``they must be home by now'' would be redundant. Suppose an accidental past event made 14 my favorite number. Then, there is a sense in which \ref{ex:My favorite number is necessarily even} is false. This means that ``necessarily'' in \ref{ex:My favorite number is necessarily even} can be historical.

Some examples of strong historical possibility follow:

\begin{exe}
\ex \exs{He \textbf{might} have won the game, but he didn't in the end.} \label{ex:He might have won the game} \ecs{Adapted from \cite{portner_modality_2009}}
\ex \exs{I \textbf{could} not have done better.} \label{ex:I could not have done better} \ecs{From \cite{palmer_modality_1990}}
\ex \exs{No one \textbf{could} have foreseen how long his illness was going to go on.} \label{ex:No one could have foreseen how long his illness was going to go on}
\end{exe}

\noindent By \ref{ex:He might have won the game}, it is epistemically settled that he did not win. So, ``might'' in it is not epistemic. Note that \ref{ex:I could not have done better} has the same meaning as ``It is not the case that I could have done better than what I have''. Then, we can see that ``could'' in \ref{ex:I could not have done better} is not epistemic; otherwise, it would be a trivial sentence. Typically, \ref{ex:No one could have foreseen how long his illness was going to go on} is uttered in a situation where it is clear that no one has foreseen how long his illness was going to go on. In this situation, \ref{ex:No one could have foreseen how long his illness was going to go on} would be trivial if we understand ``could'' in it as epistemic. This implies that ``could'' in it is not epistemic.

\paragraph{Remarks}

Strong historical necessity gets its name against another historical necessity, called \emph{weak} historical necessity, which can be expressed by ``should'' and ``ought to'' in English and by ``yingdang'' and ``yinggai'' in Chinese.

Here are some examples of weak historical necessity:

\begin{exe}
\ex Suppose Jones is in a building when an earthquake hits. The building collapses. Luckily, nothing falls upon Jones and he emerges from the rubble as the only survivor. Talking to the media, Jones says the following: \par \vspace{2pt} \exs{I \textbf{ought to} be dead right now.} \label{ex:I ought to be dead right now} \ecs{From \cite{yalcin_modalities_2016}}
\ex Consider Rasputin. He was hard to kill. First his assassins poisoned him, then they shot him, then they finally drowned him. Let us imagine that we were there. Let us suppose that the assassins fed him pastries dosed with a powerful, fast-acting poison, and then left him alone for a while, telling him they would be back in half an hour. Half an hour later, one of the assassins said to the others, confidently, ``\exs{He ought to be dead by now.}'' The others agreed, and they went to look. Rasputin opened his eyes and glared at them. ``\exs{He \textbf{ought to} be dead by now!}'' they said, astonished. \label{ex:Consider Rasputin} \ecs{From \cite{thomson_normativity_2008}}
\ex \exs{The beer \textbf{should} be cold by now, but it isn't.}
\end{exe}

\noindent The sentence \ref{ex:I ought to be dead right now} is felicitous, but its prejacent is clearly false. This implies that ``ought to'' in it is not epistemic. This is the same for the second occurrence of ``ought to'' in \ref{ex:Consider Rasputin}.

In general, necessities expressed by ``should'' and ``must'' have different properties. We refer to \cite{mcnamara_must_1996}, \cite{copley_what_2006} and \cite{fintel_how_2008} for detailed discussion. Here, we mention three differences between them. First, ``should $\phi$'' is weaker than ``must $\phi$''. Second, ``should $\phi$'' can coexist with ``not $\phi$'', but ``must $\phi$'' cannot. Third, ``should $\phi$'' is gradable, but ``must $\phi$'' is not.

The notion of strong historical necessity discussed here differs from the notion of historical necessity discussed by Thomason \cite{thomason_combinations_1984}. We call it the \emph{futural necessity} in the sequel. The main difference between the two notions can shown as follows: if an atomic proposition is true now, then it is necessary for futural necessity, but this is not the case for strong historical necessity.

\subsection{Conditional strong historical necessity}
\label{subsec:Conditional strong historical necessity}

\emph{Conditional strong historical necessity} states that strong historical necessity holds under a proposed situation.

What follows is some examples of conditional strong historical necessity:

\bee
\ex \exs{If he had stayed in the army, he \textbf{must} have become a colonel.} \label{ex:If he had stayed in the army must} \ecs{From \cite{palmer_modality_1990}}
\ex \exs{If she had taken the train, she \textbf{must} have arrived sooner.} \label{ex:If she had taken the train} \ecs{Adapted from \cite{portner_modality_2009}}
\ex \exs{If that had been a raccoon you heard in the garden last night, there \textbf{must} be its tracks in the snow now.} \label{ex:If that had been a raccoon you heard in the garden last night}
\eee

Here are some examples of \emph{conditional strong historical possibility}:

\bee
\ex \exs{If he had stayed in the army, he \textbf{might} have become a colonel.} \ecs{From \cite{palmer_modality_1990}}
\ex \exs{This \textbf{might} not have mattered much had the November declaration been seen as having put sterling wholly out of danger.} \ecs{From \cite{palmer_modality_1990}}
\ex \exs{That \textbf{could} not have happened if I had been killed.} \label{ex:if I had been killed} \ecs{Adapted from \cite{palmer_mood_2001}}
\eee

Conditional strong historical necessity is not trivial to address. Consider the well-known fable ``\emph{The lady, or the tiger?}'' from the American writer Frank Stockton \cite{wikisource_lady_2015}. Suppose it is the dramatic moment when the princess is going to point at a door. It seems correct to say \exs{if she points at the door with a lady behind it, she will necessarily be jealous, and if she points at the door with a tiger behind it, she will necessarily be sorry}. This seems to imply \emph{she will necessarily be jealous or she will necessarily be sorry}. However, it seems incorrect to say \exs{she will necessarily be jealous}, and it seems incorrect to say \exs{she will necessarily be sorry}.

\paragraph{Remarks}

It is commonly accepted that general conditionals have an implicit modality in the consequent~\cite{kratzer_modality_1991,stalnaker_defense_1981}, although it is a controversial issue what the implicit modality is in general conditionals. Are some general conditionals conditional strong historical necessity? We leave a deep investigation of this interesting issue for future work. Here, we mention one thing which seems to imply a negative answer. Consider the following two sentences from \cite{wawer_towards_2015}:

\bee
\ex \exs{If I had flipped the coin, it would have landed heads.}
\ex \exs{If I had flipped the coin, it would have \textbf{necessarily} landed heads.}
\eee

\noindent Wawer and Wro\'{n}ski \cite{wawer_towards_2015} think that our belief degree about the first sentence is $0.5$ but our belief degree about the second one is $0$.

\subsection{Related works on strong historical necessity and conditional strong historical necessity}
\label{subsec:Related works on strong historical necessity and conditional strong historical necessity}

Ju \cite{ju_logical_2022} presented a formal theory for strong and weak historical necessities in branching time. Roughly, the ideas for the two notions are as follows: the domains for strong and weak historical necessities, respectively, consist of acceptable timelines determined by indefeasible ontic rules, and expectable historical timelines determined by ontic rules.
There are many works on the futural necessity, such as \cite{prior_past_1967} and \cite{aqvist_logic_1999}.

We do not know of any works in the literature explicitly handling conditional strong historical necessity. Some works explicitly or implicitly deal with conditional futural necessity.
Ju, Grilletti and Goranko \cite{ju_logic_2018} presented a logic $\mathsf{LTC}$ to formalize reasoning about temporal conditionals ``\exs{if $\phi$, it is necessary that $\psi$}'', where both $\phi$ and $\psi$ concern the future.
Thomason and Gupta \cite{thomason_theory_1980}, Nute \cite{nute_historical_1991}, Placek and M\"{u}ller \cite{placek_counterfactuals_2007}, and Wawer and Wro\'{n}ski \cite{wawer_towards_2015} studied general conditionals with temporal dimension. In their semantics for general conditionals, there is an implicit universal quantifier in the consequent, which can be viewed as the futural necessity.

\subsection{Our work}
\label{subsec:Our work}

The rest of the paper is structured as follows. In Section \ref{sec:Logic for conditional strong historical necessity in branching time}, we present a logic for conditional strong historical necessity in branching time. In Section \ref{sec:Analyses of the nontheological version of Lavenham's argument}, we analyze the nontheological version of Lavenham's argument and compare our analyses to some related works. The paper concludes in Section \ref{sec:Conclusion}. We include proofs for some of the results in the Appendix.

Our understanding of strong historical necessity follows \cite{ju_logical_2022}. Our way to address conditional necessity follows the approach of the update semantics for conditionals proposed by Veltman \cite{veltman_making_2005}: when evaluating a conditional, we update something with its antecedent and then evaluate its consequent with respect to the update result. By our logic, the argument is not sound.

\section{Logic for conditional strong historical necessity in branching time}
\label{sec:Logic for conditional strong historical necessity in branching time}

\subsection{Our approach}
\label{subsec:Our approach}

The world can evolve in different ways in the ontic sense and there are different timelines. The agent accepts a set of \emph{ontic rules} concerning which timelines are acceptable. These ontic rules can be of many kinds: natural laws (\emph{light is faster than sound}), common natural phenomena (\emph{it is cold in Beijing during the winter}), daily experiences (\emph{lack of engine oil causes damage to engines}), and so on.

These rules may conflict with each other, and some may overrule others. Which rules defeat which others is determined by many factors. Here, we mention two of them. First, some types of rules tend to defeat other types. For example, natural laws often override daily experiences. Second, special rules tend to defeat general ones. For example, consider the following two rules: (1) \emph{the higher the altitude is, the colder the weather is}; (2) \emph{in winter in Yili Valley, the higher the altitude is, the warmer the weather is}. The latter is a special rule relative to the former one. When conflicting, the latter tends to override the former.

The agent might learn new ontic rules, and she is ready for some new ones to defeat some old ones. However, she might take some rules as \emph{indefeasible}. For example, we usually consider the following rule indefeasible: \emph{the sun will rise tomorrow}. The timelines violating any indefeasible ontic rules are \emph{unacceptable} for the agent.

\paragraph{Remarks} What indefeasible ontic rules agents accept is a complicated issue. In addition, different agents might accept different indefeasible ontic rules.

\medskip

The sentence ``\exs{if $\phi$ is the case, $\psi$ must be the case}'' is true at a moment with respect to a set of indefeasible ontic rules if and only if $\psi$ is true for all acceptable timelines determined by the result of adding $\phi$ to the set as an indefeasible ontic rule. Note that treating $\phi$ as an indefeasible ontic rule is hypothetic.

We analyze an example by these ideas. This will be our running example.

\begin{example} \label{example:tiger}
Some tribesmen capture a man called Adam and put him in a cave with a door. There are two small rooms and a leashed lion in the cave, which is close to the cave door and is not in either room. The chief tells Adam the following. At midnight, the lion will be unleashed. Adam can enter either room, but once he enters a room, its door will be locked. There is a tiger in one room and nothing in the other. If Adam is still alive tomorrow, then he will be released. In fact, the tiger is in the left room, but the chief does not tell Adam this. Figure \ref{fig:tiger} indicates the situation. After locking the cave door, the chief says the following:

\bee
\ex \exs{Adam will \textbf{necessarily} be in a room tomorrow. If he is in the left one, then he will \textbf{necessarily} be dead. If he is in the right one, then he will \textbf{necessarily} be alive.}\label{ex:Adam must be in a room tomorrow}
\eee

\noindent The next morning, the tribesmen come outside of the cave. The chief says the following:

\bee
\ex \exs{Adam is \textbf{necessarily} in a room now. If he is in the left one, then he \textbf{must} have been dead. If he is in the right one, then he \textbf{must} still be alive.} \label{ex:Adam must be in a room now 1}
\eee

\noindent They enter the cave and find Adam in the right room, alive. The chief says the following:

\bee
\ex \exs{Adam is \textbf{necessarily} in a room now. If he were in the left one, then he \textbf{must} have been dead. Since he is in the right one, he \textbf{must} still be alive.} \label{ex:Adam must be in a room now 2}
\eee

\begin{figure}[h]
\begin{center}
\begin{tikzpicture}[node distance=10mm,scale=0.80,every node/.style={transform shape}]

\draw[rounded corners] (-40mm,-20mm) rectangle (40mm,20mm);
\draw[rounded corners] (-40mm,5mm) rectangle (-15mm,20mm);
\draw[rounded corners] (15mm,5mm) rectangle (40mm,20mm);

\node[] (s-0) [] {};
\node[above=10mm,rounded corners,fill=gray!10,draw] (s-1) at (s-0) {\textsc{ADAM}};
\node[below=10mm,rounded corners,fill=gray!10,draw] (s-2) at (s-0) {\textsc{LION}};
\node [left=20mm,rounded corners,fill=gray!10,draw] (s-3) at (s-1) {\textsc{TIGER}};

\put (-38,34){\rotatebox{90}{$\shortmid$}}
\put (-38,21){\rotatebox{90}{$\shortmid$}}

\put (30.5,34){\rotatebox{90}{$\shortmid$}}
\put (30.5,21){\rotatebox{90}{$\shortmid$}}

\put (-10,-47.5){$\shortmid$}
\put (10,-47.5){$\shortmid$}

\end{tikzpicture}
\end{center}

\caption{This figure indicates the situation of Example \ref{example:tiger}.}
\label{fig:tiger}
\end{figure}
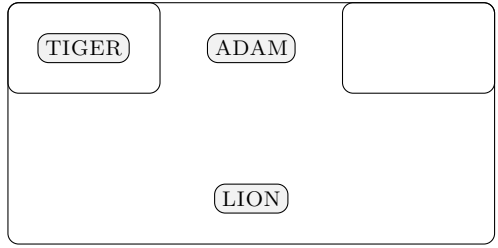
\end{example}

It seems reasonable to assume there are four indefeasible ontic rules in this example: (1) \emph{lions kill people}, (2) \emph{people want to survive}, (3) \emph{tigers kill people}, and (4) \emph{people die for a reason.} Note that Adam has just three choices: \emph{staying outside of the two rooms}, \emph{going into the left room} and \emph{going into the right room}.

Consider the first situation. By the second rule, it is unacceptable that \emph{Adam will not be in a room tomorrow}. Then, \emph{Adam will necessarily be in a room tomorrow}. Suppose we introduce a new indefeasible ontic rule: \emph{Adam will be in the left room tomorrow}. By the introduced rule and the third rule, it is unacceptable that \emph{Adam will be alive tomorrow}. Then, \emph{if Adam is in the left room tomorrow, then he will necessarily be dead}. Suppose we introduce a new indefeasible ontic rule: \emph{Adam will be in the right room tomorrow}. By the introduced rule and the fourth rule, it is unacceptable that \emph{Adam will be dead tomorrow}. Then, \emph{if Adam is in the right room tomorrow, then he will necessarily be alive}.

Consider the second situation. By the second rule, it is unacceptable that \emph{Adam is not in a room now}. Then, \emph{Adam is necessarily in a room now}. Suppose we introduce a new indefeasible ontic rule: \emph{Adam is in the left room now}. By the introduced rule and the third rule, it is unacceptable that \emph{Adam is alive now}. Then, \emph{if Adam is in the left room now, then he must have been dead}. Suppose we introduce a new indefeasible ontic rule: \emph{Adam is in the right room now}. By the introduced rule and the fourth rule, it is unacceptable that \emph{Adam is dead now}. Then, \emph{if Adam is in the right room now, then he must still be alive}.

The third situation is similar to the second one. The difference is that those tribesmen know Adam is alive in the right room in the third situation. However, whether they know this does not affect whether they accept the propositions expressed by the three sentences. It just affects their sentences: they would use subjunctive mood for the second sentence and ``since'' instead of ``if'' for the third sentence.

\subsection{Languages}
\label{subsec:Languages}

\begin{definition}[The languages $\Phi_{\XY}$ and $\Phi_{\ConSHNBT}$] \label{def:The languages Phi XY and Phi ConSONBT}
Let $\AP$ be a countable set of atomic propositions and $p$ range over it. The language $\Phi_{\XY}$ is defined as follows:

\[\alpha ::= p \mid \bot \mid \neg \alpha \mid (\alpha \land \alpha) \mid \XXX \alpha \mid \YYY \alpha\]

The language $\Phi_{\ConSHNBT}$ of the Logic for Conditional Strong Historical Necessity in Branching Time ($\ConSHNBT$) is defined as follows, where $\alpha \in \Phi_{\XY}$:

\[\phi ::= p \mid \bot \mid \neg \phi \mid (\phi \land \phi) \mid \XXX \phi \mid \YYY \phi \mid \con{\alpha} \phi\]
\end{definition}

The intuitive reading of featured formulas of $\Phi_{\XY}$ and $\Phi_{\ConSHNBT}$ is as follows:

\bi
\item $\XXX \phi$: \emph{$\phi$ will be true at the next instant.}
\item $\YYY \phi$: \emph{$\phi$ was true at the last instant.}
\item $\con{\alpha} \phi$: \emph{if $\alpha$, $\phi$ must be true at the present instant.}
\ei

What follow are some derivative expressions:

\bi
\item The propositional connectives $\top, \lor, \rightarrow$ and $\leftrightarrow$ are defined as usual.
\item Define the dual $\dcon{\alpha} \phi$ of $\con{\alpha} \phi$ as $\neg \con{\alpha} \neg \phi$, meaning \emph{if $\alpha$, $\phi$ could be true at the present instant}.
\item Define $\Box \phi$ as $\con{\top} \phi$, meaning \emph{$\phi$ must be true at the present instant.} Define the dual $\Diamond \phi$ of $\Box \phi$ as $\neg \Box \neg \phi$, indicating \emph{$\phi$ could be true at the present instant.}
\ei

\subsection{Semantic settings}
\label{subsec:Semantic settings}

\subsubsection{Models}
\label{subsubsec:Models}

\begin{definition}[Models for $\Phi_{\ConSHNBT}$] \label{def:Models for Phi ConSONBT}
A tuple $\MM = (W, <, V)$ is a \defs{model} for $\Phi_{\ConSHNBT}$ if the following conditions are satisfied:

\bi
\item $W$ is a non-empty set of states;
\item $<$ is a serial relation on $W$ meeting the following condition: there is a $r \in W$, called the root, such that for every $w \in W$, there is a unique finite sequence $x_0, \dots, x_n$ of states in $W$ such that $x_0 = r$, $x_n = w$ and $x_0 < \dots < x_n$;
\item $V: \mathsf{AP} \to \mathcal{P}(W)$ is a valuation.
\ei
\end{definition}

Intuitively, every element of $W$ represents a possible state of the world at an instant; $x < y$ indicates that the state $x$ can evolve to the state $y$ at the next instant; models indicate how the world can evolve in the time flow: there is a starting point but no ending point; the past is determined, but the future is open.

What follows are some notions related to models, which will be used later. Fix a model $\MM$. An infinite sequence $x_0, x_1, \dots$ of states is called a \emph{timeline} if $x_0 = r$ and $x_0 < x_1 < \dots$. We use $\TL{\MM}$ to indicate the set of all timelines of $\MM$. For any timeline $\pi$ and natural number $i$, we use $\pi[i]$ to indicate the $i+1$-th element of $\pi$. We call natural numbers \emph{instants}.

For every model $\MM$, timeline $\pi$ and instant $i$, $(\MM, \pi, i)$ is called a \defs{pointed model}.

\begin{example}[Pointed models] \label{example:Pointed models}
Figure \ref{fig:a pointed model} indicates a pointed model $(\MM,\pi_2,1)$. The intuitive reading of it is as follows. The timeline $\pi_2$ is in consideration. The present state is $\pi_2[1]$, that is, $w^1_1$, which has an alternative $w^1_2$. Both $w^1_1$ and $w^1_2$ evolve from the root $w^0_1$, each having two successors.

\begin{figure}[h]
\begin{center}
\begin{tikzpicture}[node distance=20mm,scale=0.8,every node/.style={transform shape}]
\tikzstyle{every state}=[draw=black,text=black,minimum size=15mm]

\node[state] (s-0-1) [] {$w^0_1 \atop \{p,q\}$};
\node[] (a) [above of=s-0-1] {};
\node[state] (s-1-1) [left of=a] {$w^1_1 \atop \{p\}$};
\node[state] (s-1-2) [right of=a] {$w^1_2 \atop \{q\}$};

\node[] (b) [above of=a] {};
\node[state] (s-2-2) [left of=b] {$w^2_2 \atop \{p\}$};
\node[state] (s-2-1) [left of=s-2-2] {$w^2_1 \atop \{\}$};
\node[state] (s-2-3) [right of=b] {$w^2_3 \atop \{q\}$};
\node[state] (s-2-4) [right of=s-2-3] {$w^2_4 \atop \{\}$};

\node [above=9mm] at (s-2-1) {$\pi_1 \atop \vdots$};
\node [above=9mm] at (s-2-2) {$\pi_2 \atop \red{\vdots}$};
\node [above=9mm] at (s-2-3) {$\pi_3 \atop \vdots$};
\node [above=9mm] at (s-2-4) {$\pi_4 \atop \vdots$};

\path
(s-0-1) edge [->,left,color=red] node {} (s-1-1)
(s-0-1) edge [->,left] node {} (s-1-2)

(s-1-1) edge [->,left] node {} (s-2-1)
(s-1-1) edge [->,left,color=red] node {} (s-2-2)

(s-1-2) edge [->,left] node {} (s-2-3)
(s-1-2) edge [->,left] node {} (s-2-4)
;

\node[] (2) [left of=s-2-1] {$2$};
\node[] (1) [below of=2] {\hspace{-18pt} \ding{248} $1$};
\node[] (0) [below of=1] {$0$};

\node [above=9mm] at (2) {$\textit{Instants} \atop \vdots$};
\end{tikzpicture}
\end{center}

\caption{A pointed model} \label{fig:a pointed model}
\end{figure}
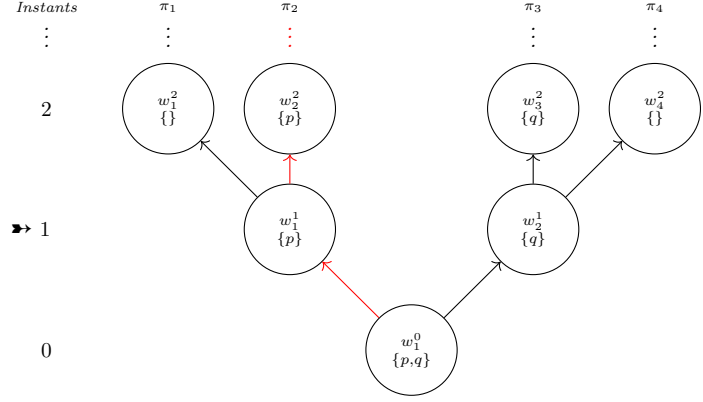
\end{example}

\subsubsection{Contexts}
\label{subsubsec:Contexts}

\begin{definition}[Contexts] \label{def:Contexts}
Let $\MM$ be a model. A finite (possibly empty) set $\CC$ of (possibly empty) sets of timelines is called a \defs{context} for $\MM$. The elements of $\CC$ are called \defs{indefeasible ontic rules}.
\end{definition}

\noindent In the sequel, we simply call indefeasible ontic rules \emph{rules}.

A set of timelines represents an ontic rule in the following sense: the timelines in the set are \emph{enabled} by the rule. We look at an example. The ontic rule \emph{super solar storms interfere with communications in the earth} excludes those timelines where there is a super solar storm at some point, but communications in the earth are not affected later. This ontic rule can be indicated as the set of timelines not excluded.

\begin{definition}[Acceptable timelines by contexts] \label{def:Acceptable timelines by contexts}
Let $\MM$ be a model and $\CC$ be a context for $\MM$. Define the set $\AT{\CC}$ of \defs{acceptable timelines by $\CC$} as $\EN{\CC}$.
\end{definition}

\noindent Note $\AT{\emptyset} = \TL{\MM}$. Acceptable timelines by $\CC$ are those meeting all the rules in $\CC$.

\begin{definition}[Contextualized models and contextualized pointed models] \label{def:Contextualized models and contextualized pointed models}
For every model $\MM$, context $\CC$, timeline $\pi$ in $\AT{\CC}$, and instant $i$, $(\MM, \CC)$ is called a \defs{contextualized model} and $(\MM, \CC, \pi, i)$ is called a \defs{contextualized pointed model}.
\end{definition}

\noindent For every contextualized pointed model $(\MM, \CC, \pi, i)$, the reason to require $\pi$ to be in $\AT{\CC}$ is that the timelines not in $\AT{\CC}$ are unacceptable for the agent and there is no point for her to consider them.

\begin{example}[Contextualized pointed models] \label{example:Contextualized pointed models}
Figure \ref{fig:Another contextualized pointed model} indicates a contextualized pointed model $(\MM, \CC, \pi_3, 1)$, where $\CC = \{\rul_1, \rul_2\}$, where $\rul_1 = \{\pi_1,\pi_2,\pi_3\}$ and $\rul_2 = \{\pi_2,\pi_3,\pi_4\}$. Then, $\AT{\CC} = \{\pi_2,\pi_3\}$, which means that $\pi_1$ and $\pi_4$ are unacceptable.

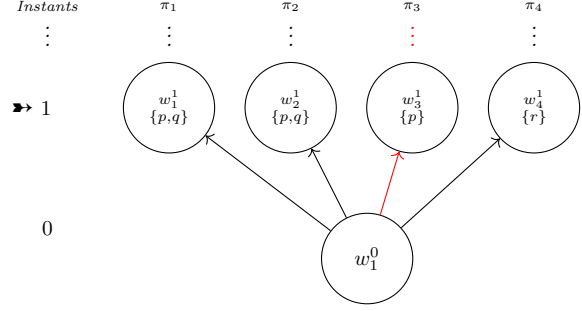
\begin{figure}[h]
\begin{center}
\begin{tikzpicture}[node distance=20mm,scale=0.8,every node/.style={transform shape}]
\tikzstyle{every state}=[draw=black,text=black,minimum size=15mm]

\node[state] (s-1-1) [] {$w^1_1 \atop \{p,q\}$};
\node[state] (s-1-2) [right of=s-1-1] {$w^1_2 \atop \{p,q\}$};
\node[state] (s-1-3) [right of=s-1-2] {$w^1_3 \atop \{p\}$};
\node[state] (s-1-4) [right of=s-1-3] {$w^1_4 \atop \{r\}$};
\node[state] (s-0-1) [below=25mm, right=5mm] at (s-1-2) {$w^0_1$};

\path
(s-0-1) edge [->,left] node {} (s-1-1)
(s-0-1) edge [->,left] node {} (s-1-2)
(s-0-1) edge [->,left,color=red] node {} (s-1-3)
(s-0-1) edge [->,left] node {} (s-1-4);

\node [above=9mm] at (s-1-1) {$\pi_1 \atop \vdots$};
\node [above=9mm] at (s-1-2) {$\pi_2 \atop \vdots$};
\node [above=9mm] at (s-1-3) {$\pi_3 \atop \red{\vdots}$};
\node [above=9mm] at (s-1-4) {$\pi_4 \atop \vdots$};

\node[] (1) [left of=s-1-1] {\hspace{-18pt} \ding{248} $1$};
\node[] (0) [below of=1] {$0$};
\node [above=9mm] at (1) {$\textit{Instants} \atop \vdots$};

\end{tikzpicture}
\end{center}

\caption{A contextualized pointed model}
\label{fig:Another contextualized pointed model}
\end{figure}
\end{example}

\subsection{Semantics}
\label{subsec:Semantics}

\begin{definition}[Semantics for $\Phi_{\ConSHNBT}$] \label{def:Semantics for Phi ConSONBT}

Fix a model $\MM$ and a context $\CC$ for $\MM$.

\bi
\item For every $\alpha \in \Phi_{\XY}$ and instant $i$, define the \defs{rule generated by $\alpha$ at $i$} as $\golf{\alpha}{i} = \{\pi \mid \MM, \CC, \pi, i \Vdash \alpha\}$.
\item For every $\alpha \in \Phi_{\XY}$ and instant $i$, define the \defs{update of $\CC$ with $\alpha$ at $i$} as $\upf{\CC}{\alpha}{i} = \upl{\CC}{\golf{\alpha}{i}}$.
\item \defs{Truth conditions of formulas of $\Phi_{\ConSHNBT}$} are defined as follows:

\medskip

\begin{tabular}{lll}
$\MM, \CC, \pi, i \Vdash p$ & $\Leftrightarrow$ & \parbox[t]{30em}{$\pi[i] \in V(p)$} \\
$\MM, \CC, \pi, i \not \Vdash \bot$ & & \\
$\MM, \CC, \pi, i \Vdash \neg \phi$ & $\Leftrightarrow$ & \parbox[t]{30em}{$\MM, \CC, \pi, i \not \Vdash \phi$} \\
$\MM, \CC, \pi, i \Vdash \phi \land \psi$ & $\Leftrightarrow$ & \parbox[t]{30em}{$\MM, \CC, \pi, i \Vdash \phi$ and $\MM, \CC, \pi, i \Vdash \psi$} \\
$\MM, \CC, \pi, i \Vdash \XXX \phi$ & $\Leftrightarrow$ & \parbox[t]{30em}{$\MM, \CC, \pi, i+1 \Vdash \phi$} \\
$\MM, \CC, \pi, i \Vdash \YYY \phi$ & $\Leftrightarrow$ & \parbox[t]{30em}{$\MM, \CC, \pi, i-1 \Vdash \phi$, if $0<i$} \\
$\MM, \CC, \pi, i \Vdash \con{\alpha} \phi$ & $\Leftrightarrow$ & \parbox[t]{30em}{$\MM, \upf{\CC}{\alpha}{i}, \pi', i \Vdash \phi$ for every $\pi' \in \AT{\upf{\CC}{\alpha}{i}}$}
\end{tabular}
\ei
\end{definition}

It can be verified that

\medskip

\begin{tabular}{lll}
$\MM, \CC, \pi, i \Vdash \dcon{\alpha} \phi$ & $\Leftrightarrow$ & \parbox[t]{27em}{$\MM, \upf{\CC}{\alpha}{i}, \pi', i \Vdash \phi$ for some $\pi' \in \AT{\upf{\CC}{\alpha}{i}}$} \\
$\MM, \CC, \pi, i \Vdash \Box \phi$ & $\Leftrightarrow$ & \parbox[t]{27em}{$\MM, \CC, \pi', i \Vdash \phi$ for every $\pi' \in \AT{\CC}$} \\
$\MM, \CC, \pi, i \Vdash \Diamond \phi$ & $\Leftrightarrow$ & \parbox[t]{27em}{$\MM, \CC, \pi', i \Vdash \phi$ for some $\pi' \in \AT{\CC}$}
\end{tabular}

We say a formula $\phi$ is \defs{valid} ($\models_\ConSHNBT \phi$) if $\MM, \CC, \pi, i \Vdash \phi$ for every contextualized pointed model $(\MM, \CC, \pi, i)$.

\begin{example} \label{example:tiger analysis}
We show how Example \ref{example:tiger} is analyzed in the formalization. We use $l, r$, and $a$ to respectively denote \emph{Adam is in the left room}, \emph{Adam is in the right room}, and \emph{Adam is alive}. Figure \ref{fig:pointed model tiger} indicates a model $\MM$. Let $\CC = \{\rul_1, \rul_2, \rul_3, \rul_4\}$ be a context, where $\rul_1 = \{\pi_1, \pi_3, \pi_4, \pi_5, \pi_6\}$, $\rul_2 = \{\pi_3, \pi_4, \pi_5, \pi_6\}$, $\rul_3 = \{\pi_1, \pi_2, \pi_3, \pi_5, \pi_6\}$ and $\rul_4 = \{\pi_1, \pi_2, \pi_3, \pi_4, \pi_5\}$. The ontic rules $\rul_1, \rul_2, \rul_3$ and $\rul_4$ respectively mean \emph{lions kill people}, \emph{people want to survive}\footnote{As Adam wants to survive, he will enter a room. This is why we use $\rul_2$ to indicate \emph{people want to survive}.}, \emph{tigers kill people}, and \emph{people die for a reason}.

The formula $\Box \XXX (l \lor r)$ says \emph{Adam must be in a room tomorrow}. It is true at $(\MM, \CC, \pi_5, 0)$. Why?

\bi
\item It can be verified $\AT{\CC} = \{\pi_3,\pi_5\}$.
\item $\MM, \CC, \pi', 0 \Vdash \XXX (l \lor r)$ for every $\pi' \in \AT{\CC}$. Thus, $\MM, \CC, \pi_5, 0 \Vdash \Box \XXX (l \lor r)$.
\ei

The formula $\con{\XXX l} \XXX \neg a$ says \emph{if Adam is in the left room tomorrow, then he must be dead}. It is true at $(\MM, \CC, \pi_5, 0)$. Why?

\bi
\item The update of $\CC$ with $\XXX l$ at $0$, $\upf{\CC}{\XXX l}{0}$, is $\{\golf{\XXX l}{0}, \rul_1, \rul_2, \rul_3, \rul_4\}$, where $\golf{\XXX l}{0} = \{\pi_3, \pi_4\}$. It can be verified $\AT{\upf{\CC}{\XXX l}{0}} = \{\pi_3\}$.
\item $\MM, \upf{\CC}{\XXX l}{0}, \pi', 0 \Vdash \XXX \neg a$ for every $\pi' \in \AT{\upf{\CC}{\XXX l}{0}}$. Thus, $\MM, \CC, \pi_5, 0 \Vdash \con{\XXX l} \XXX \neg a$.
\ei

The formula $\con{\XXX r} \XXX a$ says \emph{if Adam is in the right room tomorrow, then he must be alive}. It is true at $(\MM, \CC, \pi_5, 0)$. The argument is similar to the above one, and we skip it.

The formula $\Box (l \lor r)$ means \emph{Adam must be in a room now}. It is true at $(\MM, \CC, \pi_5, 1)$. Why?

\bi
\item As mentioned above, $\AT{\CC} = \{\pi_3,\pi_5\}$.
\item $\MM, \CC, \pi', 1 \Vdash l \lor r$ for every $\pi' \in \AT{\CC}$. Thus, $\MM, \CC, \pi_5, 1 \Vdash \Box (l \lor r)$.
\ei

The formula $\con{l} \neg a$ means \emph{if Adam were in the left room now, then he must be dead}. It is true at $(\MM, \CC, \pi_5, 1)$. Why?

\bi
\item The update of $\CC$ with $l$ at $1$, $\upf{\CC}{l}{1}$, is $\{\golf{l}{1}, \rul_1, \rul_2, \rul_3, \rul_4\}$, where $\golf{l}{1} = \{\pi_3, \pi_4\}$. It can be verified $\AT{\upf{\CC}{l}{1}} = \{\pi_3\}$.
\item $\MM, \upf{\CC}{l}{1}, \pi', 1 \Vdash \neg a$ for every $\pi' \in \AT{\upf{\CC}{l}{1}}$. Thus, $\MM, \CC, \pi_5, 1 \Vdash \con{l} \neg a$.
\ei

The formula $\con{r} a$ means \emph{since Adam is in the right room now, he must be alive}. It is true at $(\MM, \CC, \pi_5, 1)$. The argument is similar to the above one, and we skip it.

\begin{figure}[h]
\begin{center}
\begin{tikzpicture}[node distance=25mm,scale=0.8,every node/.style={transform shape}]
\tikzstyle{every state}=[draw=black,text=black,minimum size=14mm]

\node[state] (s-1-1) [] {$w^1_1 \atop \{\}$};
\node[state] (s-1-2) [right of=s-1-1] {$w^1_2 \atop \{a\}$};
\node[state] (s-1-3) [right of=s-1-2] {$w^1_3 \atop \{l\}$};
\node[state] (s-1-4) [right of=s-1-3] {$w^1_4 \atop \{l,a\}$};
\node[state] (s-1-5) [right of=s-1-4] {$w^1_5 \atop \{r,a\}$};
\node[state] (s-1-6) [right of=s-1-5] {$w^1_6 \atop \{r\}$};

\node[state] (s-0-1) [below=25mm, right=0mm] at (s-1-3) {$w^0_1 \atop \{\}$};

\node [above=12mm] at (s-1-1) {$\pi_1 \atop \vdots$};
\node [above=12mm] at (s-1-2) {$\pi_2 \atop \vdots$};
\node [above=12mm] at (s-1-3) {$\pi_3 \atop \vdots$};
\node [above=12mm] at (s-1-4) {$\pi_4 \atop \vdots$};
\node [above=12mm] at (s-1-5) {$\pi_5 \atop \vdots$};
\node [above=12mm] at (s-1-6) {$\pi_6 \atop \vdots$};

\path
(s-0-1) edge [->,left] node {} (s-1-1)
(s-0-1) edge [->,left] node {} (s-1-2)
(s-0-1) edge [->,left] node {} (s-1-3)
(s-0-1) edge [->,left] node {} (s-1-4)
(s-0-1) edge [->,left] node {} (s-1-5)
(s-0-1) edge [->,left] node {} (s-1-6);

\node[] (1) [left of=s-1-1] {$1$};
\node[] (0) [below=25mm] at (1) {$0$};
\node [above=12mm] at (1) {$\textit{Instants} \atop \vdots$};

\end{tikzpicture}
\end{center}

\caption{A pointed model for Example \ref{example:tiger}}
\label{fig:pointed model tiger}
\end{figure}
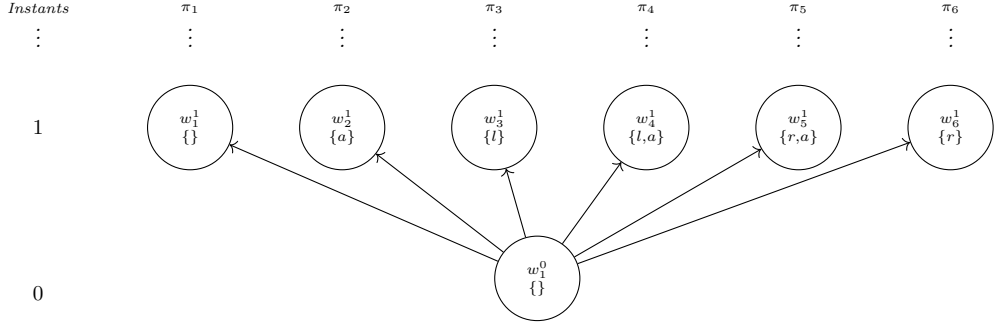
\end{example}

\subsection{Remarks}
\label{subsec:Remarks}


In the above, we define contexts as sets of ontic rules, which are sets of timelines. From the point of view of logic, contexts can be defined as sets of timelines without changing the set of valid formulas. We do this in order to make the formalization more intuitive.


The formulas $\alpha \rightarrow \Box \phi$, $\Box (\alpha \rightarrow \phi)$ and $\con{\alpha} \phi$ seem to have similar meaning, where $\alpha$ in $\Phi_{\XY}$ and $\phi$ in $\Phi_{\ConSHNBT}$. Now we compare their meaning.

\begin{itemize}
\item By the semantics, $\alpha \rightarrow \Box \phi$ is true at $(\MM, \CC, \pi, i)$ if and only if if $\alpha$ is true at $(\MM, \CC, \pi, i)$, then $\Box \phi$ is true at $(\MM, \CC, \pi, i)$. The formula $\alpha \rightarrow \Box \phi$ is a material implication: in every situation, it excludes a possibility about the truth values of $\alpha$ and $\Box \phi$ in this situation.
\item By the semantics, $\Box (\alpha \rightarrow \phi)$ is true at $(\MM, \CC, \pi, i)$ if and only if for every $\pi' \in \AT{\CC}$, if $\alpha$ is true at $(\MM, \CC, \pi', i)$, then $\phi$ is true at $(\MM, \CC, \pi', i)$. The formula $\Box (\alpha \rightarrow \phi)$ is the strict implication: in every situation, it, for any acceptable situation, excludes a possibility about the truth values of $\alpha$ and $\phi$ in the acceptable situation.
\item By the semantics, $\con{\alpha} \phi$ is true at $(\MM, \CC, \pi, i)$ if and only if for every $\pi' \in \AT{\upf{\CC}{\alpha}{i}}$, $\phi$ is true at $(\MM, \upf{\CC}{\alpha}{i}, \pi', i)$. In every situation, $\con{\alpha} \phi$ says that for every acceptable situation where $\alpha$ is true, $\phi$ holds with respect to the updated context.

The difference between $\alpha \rightarrow \Box \phi$ and $\con{\alpha} \phi$ can be shown by Fact \ref{fact:square brackets and implication}.

The difference between $\Box (\alpha \rightarrow \phi)$ and $\con{\alpha} \phi$ is that the former does not involve context change, but the latter does. This makes a difference for their truth values when $\phi$ contains a modality, as shown by Fact \ref{fact:square brackets and box}.
\end{itemize}

\begin{fact} \label{fact:square brackets and implication}
For all $\alpha$ in $\Phi_{\XY}$, $\con{\alpha} \alpha$ is valid, but for some $\alpha$ in $\Phi_{\XY}$, $\alpha \rightarrow \Box \alpha$ is invalid.
\end{fact}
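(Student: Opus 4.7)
The plan is to isolate the one conceptual ingredient the fact really depends on, namely that formulas of $\Phi_{\XY}$ are insensitive to the context parameter, and then let both halves fall out by unfolding definitions.

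First I would prove the following auxiliary claim by a straightforward induction on the construction of $\alpha \in \Phi_{\XY}$: for every model $\MM$, all contexts $\CC, \CC'$ for $\MM$, every timeline $\pi \in \AT{\CC} \cap \AT{\CC'}$, and every instant $i$, we have $\MM, \CC, \pi, i \Vdash \alpha$ iff $\MM, \CC', \pi, i \Vdash \alpha$. The base cases $p$ and $\bot$ refer only to the valuation $V$, and the inductive clauses for $\neg$, $\land$, $\XXX$ and $\YYY$ in Definition \ref{def:Semantics for Phi ConSONBT} reduce the truth of a compound to that of its $\Phi_{\XY}$ subformulas at $\pi$ and a shifted instant, never consulting the context. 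Because $\Phi_{\XY}$ contains no $\con{\cdot}$ operator, the induction is immediate; this is the one step I would write out carefully.

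For validity of $\con{\alpha}\alpha$, I would fix an arbitrary contextualized pointed model $(\MM, \CC, \pi, i)$ and unfold the semantic clause: the task is to show $\MM, \upf{\CC}{\alpha}{i}, \pi', i \Vdash \alpha$ for every $\pi' \in \AT{\upf{\CC}{\alpha}{i}}$. Since $\upf{\CC}{\alpha}{i} = \CC \cup \{\golf{\alpha}{i}\}$, we have $\AT{\upf{\CC}{\alpha}{i}} = \AT{\CC} \cap \golf{\alpha}{i}$, so any such $\pi'$ lies in $\golf{\alpha}{i}$, meaning $\MM, \CC, \pi', i \Vdash \alpha$ by definition of the generated rule. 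The auxiliary claim then transports this to the updated context $\upf{\CC}{\alpha}{i}$, which is what we needed.

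For the second half I would exhibit a small counterexample with $\alpha$ atomic. Take a model with root $r$ branching to two successors, extended to two timelines $\pi_1, \pi_2$, and set $V(p) = \{\pi_1[1]\}$. With $\CC = \emptyset$ we have $\AT{\emptyset} = \TL{\MM}$, so both timelines are acceptable; then $\MM, \emptyset, \pi_1, 1 \Vdash p$ while $\MM, \emptyset, \pi_2, 1 \not\Vdash p$, so $\Box p$ fails at $(\MM, \emptyset, \pi_1, 1)$ and the material implication $p \rightarrow \Box p$ is refuted.

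The only nontrivial step is the context-independence induction; it is conceptually simple but must be written with care to confirm that no semantic clause for a $\Phi_{\XY}$-connective sneaks the context back in. Once that is recorded, both claims of the fact reduce to a direct unfolding of definitions.
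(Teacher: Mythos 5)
Your proof is correct. The paper itself offers no proof of this fact (it is declared ``easy to verify'' and skipped), and your argument is exactly the intended verification: the first half follows from $\AT{\upf{\CC}{\alpha}{i}} = \AT{\CC} \cap \golf{\alpha}{i}$ (Item \ref{item:basic alpha alpha} of Lemma \ref{lemma:basic}) together with the context-independence of $\Phi_{\XY}$-formulas, and the second half from a two-timeline countermodel of the same shape the paper uses for Fact \ref{fact:square brackets and box}. Your explicit induction establishing that $\Phi_{\XY}$-formulas are insensitive to the context is a worthwhile addition, since the paper relies on this tacitly (for instance in the displayed proof of Fact \ref{fact:square brackets and box}, when passing from $\MM,\upf{\CC}{\alpha}{i},\pi',i \not\Vdash \beta$ to $\MM,\CC,\pi',i \not\Vdash \beta$) without ever stating it.
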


\noindent This fact is easy to verify, and we skip its proof.

\begin{fact} \label{fact:square brackets and box}
For all $\alpha$ and $\beta$ in $\Phi_{\XY}$, $\con{\alpha} \beta \leftrightarrow \Box (\alpha \rightarrow \beta)$ is valid, but for some $\alpha$ in $\Phi_{\XY}$ and $\phi$ in $\Phi_{\ConSHNBT}$, $\con{\alpha} \phi \leftrightarrow \Box (\alpha \rightarrow \phi)$ is invalid.
\end{fact}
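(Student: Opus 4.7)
The plan rests on a context-invariance lemma for $\Phi_{\XY}$: for every $\gamma \in \Phi_{\XY}$ and any two contexts $\CC_1, \CC_2$, $\MM, \CC_1, \pi, i \Vdash \gamma$ iff $\MM, \CC_2, \pi, i \Vdash \gamma$. This is a routine induction on $\gamma$, since none of the clauses for $p$, $\bot$, $\neg$, $\land$, $\XXX$, $\YYY$ mentions the context. An immediate consequence is that $\golf{\alpha}{i}$ does not depend on the ambient context, and that $\AT{\upf{\CC}{\alpha}{i}} = \EN{\CC \cup \{\golf{\alpha}{i}\}} = \AT{\CC} \cap \golf{\alpha}{i}$.

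Granted this, the valid half is a pure unfolding. The clause for $\con{\alpha}\beta$ at $(\MM, \CC, \pi, i)$ says $\beta$ holds at $(\MM, \upf{\CC}{\alpha}{i}, \pi', i)$ for every $\pi' \in \AT{\CC} \cap \golf{\alpha}{i}$; applying the lemma to $\beta$ (since $\beta \in \Phi_{\XY}$) lets me shift the context back to $\CC$, yielding ``for every $\pi' \in \AT{\CC}$, if $\MM, \CC, \pi', i \Vdash \alpha$ then $\MM, \CC, \pi', i \Vdash \beta$'', which is exactly the unfolding of $\Box(\alpha \rightarrow \beta)$.

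For the invalidity half, I would take $\alpha = p$ and $\phi = \Box p$ for an atomic $p$, and build a minimal model $\MM$ with root $w_0$, two successors $w_1 \in V(p)$ and $w_2 \notin V(p)$, each looping to itself to secure seriality. With $\CC = \emptyset$ so that both timelines $\pi_1$ (through $w_1$) and $\pi_2$ (through $w_2$) are acceptable, updating at instant $1$ with $p$ discards $\pi_2$, so $\con{p}\Box p$ comes out true at $(\MM, \emptyset, \pi_1, 1)$; but $\Box(p \rightarrow \Box p)$ is false there, since on $\pi_1$ the antecedent $p$ holds while $\Box p$ fails because $p$ is false at $\pi_2[1]$ and $\pi_2$ is still acceptable under the unchanged $\emptyset$. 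The only real obstacle is conceptual rather than computational: the validity proof hinges on context-invariance to evaluate the consequent under the unupdated $\CC$, and that step breaks precisely when the consequent contains a modality whose domain of quantification is sensitive to the extra rule $\golf{\alpha}{i}$ added by the update, which is exactly what the counterexample exploits at the minimum nontrivial modal depth.
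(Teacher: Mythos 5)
Your argument is in substance the paper's own: the valid half unfolds both semantic clauses via $\AT{\upf{\CC}{\alpha}{i}} = \AT{\CC} \cap \golf{\alpha}{i}$ together with the context-invariance of $\Phi_{\XY}$-formulas (which the paper uses tacitly and you usefully make explicit), and your counterexample $\alpha = p$, $\phi = \Box p$ with two timelines splitting on $p$ at instant $1$ over the empty context is the paper's counterexample up to relabelling of which timeline carries $p$. One small repair is needed: you cannot ``secure seriality'' with self-loops $w_1 < w_1$ and $w_2 < w_2$, because the definition of models requires a \emph{unique} finite $<$-path from the root to each state, and a loop at $w_1$ yields both $r, w_1$ and $r, w_1, w_1$; instead append an infinite chain of fresh states above each of $w_1$ and $w_2$ so that exactly the two intended timelines exist. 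Nothing else in your argument is affected.
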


\begin{proof}
Let $\alpha$ and $\beta$ be in $\Phi_{\XY}$. Fix a contextualized pointed model $(\MM,\CC,\pi,i)$.
Assume $\MM,\CC,\pi,i \not \Vdash \con{\alpha} \beta$. Then, $\MM,\upf{\CC}{\alpha}{i},\pi',i \not \Vdash \beta$ for some $\pi' \in \AT{\upf{\CC}{\alpha}{i}}$. By Item \ref{item:basic alpha alpha} in Lemma \ref{lemma:basic}, $\pi' \in \golf{\alpha}{i}$ and $\pi' \in \AT{\CC}$. Then, $\MM,\CC,\pi',i \Vdash \alpha$ and $\MM,\CC,\pi',i \not \Vdash \beta$. Then, $\MM,\CC,\pi',i \not \Vdash \alpha \rightarrow \beta$. Then, $\MM,\CC,\pi,i \not \Vdash \Box (\alpha \rightarrow \beta)$.
Assume $\MM,\CC,\pi,i \not \Vdash \Box (\alpha \rightarrow \beta)$. Then, $\MM,\CC,\pi',i \not \Vdash \alpha \rightarrow \beta$ for some $\pi' \in \AT{\CC}$. Then, $\MM,\CC,\pi',i \Vdash \alpha$ and $\MM,\CC,\pi',i \not \Vdash \beta$. Then, $\pi' \in \golf{\alpha}{i}$. By Item \ref{item:basic alpha alpha} in Lemma \ref{lemma:basic}, $\pi' \in \AT{\upf{\CC}{\alpha}{i}}$. Then, $\MM,\upf{\CC}{\alpha}{i},\pi',i \not \Vdash \beta$. Then, $\MM,\CC,\pi,i \not \Vdash \con{\alpha} \beta$.

Let $\alpha = p$ and $\phi = \Box p$. Let $\MM = (W, <, V)$ be a model meeting the following conditions: it contains only two timelines, $\pi_1$ and $\pi_2$; $V(p) = \{\pi_2[1]\}$. Let $\CC$ be the empty context. It can be verified that $\con{p} \Box p$ is true at $(\MM, \pi_1, \CC, 1)$ but $\Box (p \rightarrow \Box p)$ is false there.
\end{proof}


The following fact indicates that unnested conditional strong historical necessity is monotonic, but nested conditional strong historical necessity is not.

\begin{fact}
For all $\alpha$, $\beta$ and $\gamma$ in $\Phi_{\XY}$, if $\alpha \rightarrow \beta$ is valid, then $\con{\beta} \gamma \rightarrow \con{\alpha} \gamma$ is valid, but for some $\alpha$ and $\beta$ in $\Phi_{\XY}$ and $\phi \in \Phi_{\ConSHNBT}$, $\alpha \rightarrow \beta$ is valid and $\con{\beta} \phi \rightarrow \con{\alpha} \phi$ is invalid.
\end{fact}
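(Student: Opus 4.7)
For the first part, my plan is to reduce to Fact \ref{fact:square brackets and box}. Since $\gamma$ lies in $\Phi_{\XY}$, that fact gives $\models_\ConSHNBT \con{\alpha}\gamma \leftrightarrow \Box(\alpha \rightarrow \gamma)$ and similarly for $\beta$, so the claim becomes: if $\alpha \rightarrow \beta$ is valid then $\Box(\beta \rightarrow \gamma) \rightarrow \Box(\alpha \rightarrow \gamma)$ is valid. This is essentially the K-axiom reasoning. Concretely, fix a contextualized pointed model $(\MM,\CC,\pi,i)$ and assume $\MM,\CC,\pi,i \Vdash \Box(\beta \rightarrow \gamma)$. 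For any $\pi' \in \AT{\CC}$, the validity of $\alpha \rightarrow \beta$ applied to $(\MM,\CC,\pi',i)$ together with $\MM,\CC,\pi',i \Vdash \beta \rightarrow \gamma$ yields $\MM,\CC,\pi',i \Vdash \alpha \rightarrow \gamma$; hence $\Box(\alpha \rightarrow \gamma)$ holds there.

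For the counterexample, my idea is to exploit the fact that updating with a logically stronger antecedent $\alpha$ produces a \emph{smaller} set of acceptable timelines than updating with $\beta$, and that shrinking the acceptable set can falsify a $\Diamond$-statement evaluated under the updated context. I would take $\alpha = q$, $\beta = \neg \bot$ (i.e.\ $\top$), and $\phi = \Diamond p$, where $p$ and $q$ are distinct atoms; clearly $\alpha \rightarrow \beta$ is valid. Then I construct a small model $\MM$ whose timelines at instant $1$ contain exactly two relevant branches $\pi_1,\pi_2$ with $p \in V$ only at $\pi_1[1]$ and $q \in V$ only at $\pi_2[1]$ (continuations chosen arbitrarily to satisfy seriality). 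With the empty context $\CC$, one computes $\golf{\top}{1} = \{\pi_1,\pi_2\}$, so $\AT{\upf{\CC}{\top}{1}} = \{\pi_1,\pi_2\}$, and $\Diamond p$ holds throughout because $\pi_1$ witnesses $p$; thus $\con{\top}\Diamond p$ is true at $(\MM,\CC,\pi_1,1)$. By contrast, $\golf{q}{1} = \{\pi_2\}$, so $\AT{\upf{\CC}{q}{1}} = \{\pi_2\}$, and on this singleton $\Diamond p$ fails since $p$ is false at $\pi_2[1]$; hence $\con{q}\Diamond p$ is false at $(\MM,\CC,\pi_1,1)$, breaking the implication.

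The only mildly delicate point will be carefully unfolding the semantics of the nested $\con{\cdot}\Diamond\cdot$ so it is transparent that the context change is what causes the failure — that is, emphasising that $\Diamond p$ under the updated context $\upf{\CC}{q}{1}$ quantifies only over timelines in $\AT{\upf{\CC}{q}{1}}$, not over $\AT{\CC}$. The rest is a routine verification from Definition \ref{def:Semantics for Phi ConSONBT}.
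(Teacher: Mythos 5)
Your proposal is correct and follows essentially the same route as the paper: the positive half is proved by reducing $\con{\cdot}\gamma$ to $\Box(\cdot\rightarrow\gamma)$ via Fact \ref{fact:square brackets and box} and then applying K-style reasoning, and the counterexample exploits exactly the same mechanism the paper uses (with $\alpha = \top \land p$, $\beta = \top$, $\phi = \Diamond\neg p$), namely that updating with a logically stronger antecedent shrinks the set of acceptable timelines and thereby falsifies a $\Diamond$-claim under the updated context. Your particular instantiation ($\alpha = q$, $\phi = \Diamond p$) differs only cosmetically and checks out.
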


\begin{proof}
Let $\alpha$, $\beta$ and $\gamma$ be in $\Phi_{\XY}$. Assume $\alpha \rightarrow \beta$ is valid. Then, $\Box (\alpha \rightarrow \beta)$ is valid. Note that $\Box (\alpha \rightarrow \beta) \rightarrow (\Box (\beta \rightarrow \gamma) \rightarrow \Box (\alpha \rightarrow \gamma))$ is valid. Then, $\Box (\beta \rightarrow \gamma) \rightarrow \Box (\alpha \rightarrow \gamma)$ is valid. By Fact \ref{fact:square brackets and box}, $\con{\beta} \gamma \rightarrow \con{\alpha} \gamma$ is valid.

Let $\alpha = \top \land p$, $\beta = \top$ and $\phi = \Diamond \neg p$. Clearly, $(\top \land p) \rightarrow \top$ is valid. It is easy to verify that $\con{\top} \Diamond \neg p \rightarrow \con{\top \land p} \Diamond \neg p$ is not valid.
\end{proof}


The following fact indicates that reasoning by case is not sound with respect to conditional strong historical necessity.

\begin{fact}
For some $\alpha$ in $\Phi_{\XY}$, and $\phi$ and $\psi$ in $\Phi_{\ConSHNBT}$, the inference from $\con{\alpha} \phi$ and $\con{\neg \alpha} \psi$ to $\Box \phi \lor \Box \psi$ is unsound.
\end{fact}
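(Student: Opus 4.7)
The plan is to supply a small explicit counterexample rather than prove anything general; the lady-or-the-tiger discussion in Section~\ref{subsec:Conditional strong historical necessity} already suggests the shape, namely that both conditional necessities can hold while neither unconditional necessity does. I would take the simplest possible witnesses: $\alpha = p$, $\phi = p$, and $\psi = \neg p$. With this choice, the two premises $\con{p} p$ and $\con{\neg p} \neg p$ are \emph{valid} by Fact~\ref{fact:square brackets and implication}, so no manipulation of contexts or of the update operation $+_i$ is needed to verify them at the model I construct.

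For the failure of the conclusion, I would build a minimal model $\MM = (W, <, V)$ with root $r$ and exactly two immediate successors $w_1, w_2$ of $r$, each extended by its own infinite linear chain of fresh successors so that the seriality and tree conditions of Definition~\ref{def:Models for Phi ConSONBT} are satisfied. This gives exactly two timelines $\pi_1$ (through $w_1$) and $\pi_2$ (through $w_2$). Set $V(p) = \{w_1\}$, so $p$ is true at $\pi_1[1]$ and false at $\pi_2[1]$. Taking the empty context $\CC = \emptyset$, we have $\AT{\CC} = \TL{\MM} = \{\pi_1, \pi_2\}$, and both timelines remain acceptable.

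I would then evaluate at the contextualized pointed model $(\MM, \emptyset, \pi_1, 1)$. The premises hold there by the validity noted above. For the conclusion, $\Box p$ fails because $p$ is false along $\pi_2$ at instant $1$, and $\Box \neg p$ fails because $p$ is true along $\pi_1$ at instant $1$; hence $\Box p \lor \Box \neg p$ is false. This completes the counterexample and shows the inference is unsound.

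There is no genuine obstacle in this argument. The only minor point to get right is satisfying the serial-tree condition on $<$ despite the smallness of the branching core, which is why I pad each of $w_1, w_2$ with a trivial infinite successor chain; everything else reduces to reading off the clauses of Definition~\ref{def:Semantics for Phi ConSONBT} together with Fact~\ref{fact:square brackets and implication}.
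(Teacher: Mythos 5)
Your proposal is correct and matches the paper's own proof essentially verbatim: the same witnesses $\alpha = p$, $\phi = p$, $\psi = \neg p$, a two-timeline model with $V(p) = \{\pi_1[1]\}$, the empty context, validity of the premises via Fact~\ref{fact:square brackets and implication}, and failure of $\Box p \lor \Box \neg p$ at instant $1$. The only difference is that you spell out the construction of the tree more explicitly, which the paper leaves implicit.
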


\begin{proof}
Let $\alpha = p$, $\phi = p$ and $\psi = \neg p$. Let $\MM = (W, <, V)$ be a model meeting the following conditions: it contains only two timelines, $\pi_1$ and $\pi_2$; $V(p) = \{\pi_1[1]\}$. Let $\CC$ be the empty context. By Fact \ref{fact:square brackets and implication}, $\MM,\CC,\pi_1, 1 \Vdash \con{p} p$ and $\MM,\CC,\pi_1, 1 \Vdash \con{\neg p} \neg p$. It can be verified that $\MM,\CC,\pi_1, 1 \not \Vdash \Box p \lor \Box \neg p$.
\end{proof}

\subsection{Comparsions to some works on conditional futural necessity}
\label{subsec:Comparsions to some works on conditional futural necessity}

Here, we compare this work to \cite{ju_logic_2018} and \cite{thomason_theory_1980}, which explicitly and implicitly deal with conditional futural necessity, respectively.

Ju, Grilletti and Goranko \cite{ju_logic_2018} presented a logic $\mathsf{LTC}$ to formalize reasoning about temporal conditionals ``\exs{if $\phi$, it is necessary that $\psi$}''.
The logic $\mathsf{LTC}$ contains a next instant operator $\XXX$, a necessity operator $\mathbf{A}$, and an operator $[\cdot]$ of conditionals. The operators $\XXX$ and $\mathbf{A}$ are handled similarly to $\mathsf{CTL}^*$ proposed in \cite{emerson_sometimes_1986}. The operator $[\cdot]$ is dealt with as follows: $[\alpha] \phi$ is true at a pointed model $(\MM,\pi,i)$ if and only if $\phi$ is true at the pointed model $(\MM',\pi,i)$, where $\MM'$ is the result of shrinking $\MM$ by $\alpha$ at $(\pi,i)$.

The logic $\mathsf{LTC}$ and our work share a similar way of formalizing conditional necessity: the antecedent of conditional necessity updates something, and the consequent is evaluated relative to the update result. As a result, reasoning by case with respect to conditional necessity is sound in neither of them.

The main difference between the two theories is as follows. In the work of \cite{ju_logic_2018}, the antecedent of conditionals directly shrinks trees, while in our theory, it excludes timelines from sets of acceptable timelines. The consequences of the difference are mostly technical. Here, we mention only one point. Shrinking a tree and excluding timelines from a set of timelines of the tree are not the same, as some subset of the set might not correspond to a tree. Here, the so-called \emph{limit closure condition} for sets of timelines is involved \cite{reynolds_axiomatization_2001}.

Thomason and Gupta \cite{thomason_theory_1980} thought that time plays an essential role in evaluating conditionals and present two formal theories to make the role explicit. Both theories follow the approach of \cite{stalnaker_theory_1968} for conditionals: a conditional ``\exs{if $\phi$ then $\psi$}'' is true at a possibility $x$ if $\psi$ is true at \emph{the} possibility that resembles $x$ the most among the possibilities where $\phi$ is true. Both theories use branching time trees. The difference between the two theories is that in one of them, possibilities are pairs of a timeline and a moment, and in the other, they are pairs of a future choice function, which is a generation of timelines, and a moment. Thomason and Gupta introduce a special principle to compare possibilities, called \emph{the principle of past predominance}: if the past of a possibility $y$ is more similar to the past of a possibility $x$ than the past of a possibility $z$, then $y$ is more similar to $x$ than $z$.

The main difference between Thomason and Gupta's work and ours is that Thomason and Gupta use \emph{similarity} to deal with conditionals, while we use \emph{update of contexts} to address conditional strong historical necessity.

\subsection{Expressivity}
\label{subsec:Expressivity}

The operator $\con{\cdot}$ in $\Phi_{\ConSHNBT}$ can \emph{almost} be reduced.

\begin{lemma} \label{lemma:X and Y pass through everything}
The following formulas are valid:

\begin{enumerate}
\item
\begin{enumerate}
\item $\XXX \neg \phi \leftrightarrow \neg \XXX \phi$ \label{validity:X not}
\item $\XXX (\phi \land \psi) \leftrightarrow (\XXX \phi \land \XXX \psi)$ \label{validity:X and}
\item $\XXX \YYY \phi \leftrightarrow \phi$ \label{validity:X Y}
\item $\XXX \con{\alpha} \phi \leftrightarrow \con{\XXX \alpha} \XXX \phi$
\end{enumerate}
\item
\begin{enumerate}
\item $\YYY \neg \phi \leftrightarrow (\YYY \bot \lor \neg \YYY \phi)$ \label{validity:Y not}
\item $\YYY (\phi \land \psi) \leftrightarrow (\YYY \phi \land \YYY \psi)$ \label{validity:Y and}
\item $\YYY \XXX \phi \leftrightarrow (\YYY \bot \lor \phi)$ \label{validity:Y X}
\item $\YYY \con{\alpha} \phi \leftrightarrow \con{\YYY \alpha} \YYY \phi$
\end{enumerate}
\end{enumerate}
\end{lemma}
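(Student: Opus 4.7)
The plan is to fix an arbitrary contextualized pointed model $(\MM,\CC,\pi,i)$ and verify each of the eight equivalences by unfolding truth conditions. Items (1)(a)--(c) are immediate from the clause for $\XXX$ alone. Items (2)(a)--(c) require a case split on $i = 0$ versus $i > 0$, because the clause for $\YYY$ only applies when $0 < i$; I read the ``if $0 < i$'' proviso as making $\YYY \psi$ vacuously true at the root for every $\psi$, which is precisely why the disjunct $\YYY \bot$ appears on the right-hand sides of (2)(a) and (2)(c): at $i = 0$ that disjunct is true so both sides are true, while at $i > 0$ it is false and the identities reduce to the expected ones. This root-instant bookkeeping is the only spot in the (a)--(c) items that demands more than mechanical rewriting.

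The substantive items are (1)(d) and (2)(d). The key auxiliary observation is that for every $\alpha \in \Phi_{\XY}$ the truth value of $\alpha$ at $(\MM,\CC,\pi,i)$ does not depend on $\CC$, because the context enters the semantics only through the clause for $\con{\cdot}$ and that operator does not occur in $\Phi_{\XY}$; a one-line induction on $\alpha$ confirms this. From it together with the clauses for $\XXX$ and $\YYY$ I read off the pointwise equalities
\[
\golf{\XXX \alpha}{i} \;=\; \golf{\alpha}{i+1}, \qquad \golf{\YYY \alpha}{i} \;=\; \golf{\alpha}{i-1} \text{ for } i > 0,
\]
and consequently the update identities $\upf{\CC}{\XXX \alpha}{i} = \upf{\CC}{\alpha}{i+1}$ and $\upf{\CC}{\YYY \alpha}{i} = \upf{\CC}{\alpha}{i-1}$ for $i > 0$.

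With these update identities in hand, (1)(d) drops out by unfolding both sides: $\XXX \con{\alpha} \phi$ at $(\MM,\CC,\pi,i)$ says that for every $\pi' \in \AT{\upf{\CC}{\alpha}{i+1}}$ the formula $\phi$ holds at $(\MM,\upf{\CC}{\alpha}{i+1},\pi',i+1)$, and after substituting the update identity this is exactly the unfolding of $\con{\XXX \alpha} \XXX \phi$ at $(\MM,\CC,\pi,i)$ with one $\XXX$-step stripped. For (2)(d) at $i > 0$ the argument is symmetric via $\upf{\CC}{\YYY \alpha}{i} = \upf{\CC}{\alpha}{i-1}$; at $i = 0$ both sides are vacuously true, since on the left $\YYY \con{\alpha} \phi$ is vacuously true at the root and on the right $\YYY \phi$ under the updated context is vacuously true at the root as well. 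The one place where the argument is non-mechanical is exactly this $i = 0$ case for (2)(d), together with the root-instant case split that forces the $\YYY \bot$ disjuncts in (2)(a) and (2)(c) to appear for the right reason.
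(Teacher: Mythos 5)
Your proof is correct and is exactly the routine unfolding the paper has in mind when it says the result "can be easily shown" and omits the proof. You correctly resolve the two points that actually need care: the "if $0<i$" proviso must be read as making $\YYY\psi$ vacuously true at the root (which is what forces the $\YYY\bot$ disjuncts in (2)(a) and (2)(c) and settles the $i=0$ case of (2)(d)), and the update identities $\upf{\CC}{\XXX\alpha}{i}=\upf{\CC}{\alpha}{i+1}$ and $\upf{\CC}{\YYY\alpha}{i}=\upf{\CC}{\alpha}{i-1}$ rest on the observation that formulas of $\Phi_{\XY}$ are context-independent, so $\golf{\XXX\alpha}{i}=\golf{\alpha}{i+1}$ and $\golf{\YYY\alpha}{i}=\golf{\alpha}{i-1}$.
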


\noindent This result can be easily shown, and we skip its proof.

For every natural number $n$ (possibly $0$), we use $\XXX^n$ and $\YYY^n$ to indicate the sequences of $n$ $\XXX$s and $n$ $\YYY$s respectively.

\begin{definition}[The languages $\Phi_{\NXXYY}$ and $\Phi_{\ConXXYY}$] \label{def:The Languages Phi NXXYY and ConXXYY}
Define the language $\Phi_{\NXXYY}$ as follows:

\[\beta ::= \XXX^n p \mid \XXX^n \bot \mid \YYY^n p \mid \YYY^n \bot \mid \neg \beta \mid (\beta \land \beta)\]

\noindent Define the language $\Phi_{\ConXXYY}$ as follows, where $\beta \in \Phi_{\NXXYY}$:

\[\phi ::= \XXX^n p \mid \XXX^n \bot \mid \YYY^n p \mid \YYY^n \bot \mid \neg \phi \mid (\phi \land \phi) \mid \con{\beta} \phi\]
\end{definition}

By Lemma \ref{lemma:X and Y pass through everything}, we can easily show the following result, whose proof is skipped.

\begin{fact} \label{fact:reduction 1 semantics}
There is an effective function $\kappa$ from $\Phi_{\ConSHNBT}$ to $\Phi_{\ConXXYY}$ such that for every $\phi \in \Phi_{\ConSHNBT}$, $\models_\ConSHNBT \phi \leftrightarrow \kappa(\phi)$.
\end{fact}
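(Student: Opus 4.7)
The plan is to translate by pushing $\XXX$ and $\YYY$ inward until they sit only in front of atomic propositions or $\bot$, which is exactly the shape required by the grammar of $\Phi_{\ConXXYY}$. Concretely, I first define two auxiliary syntactic operations $\pi_{\XXX}$ and $\pi_{\YYY}$ on $\Phi_{\ConXXYY}$, each producing a $\Phi_{\ConXXYY}$-formula equivalent to prefixing one more $\XXX$, respectively $\YYY$, at the root. The main translation $\kappa$ is then built by structural recursion on $\Phi_{\ConSHNBT}$, invoking $\pi_{\XXX}, \pi_{\YYY}$ at the temporal cases.

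For $\pi_{\XXX}$: on an atomic block $\XXX^n p$ or $\XXX^n \bot$ prepend another $\XXX$; on $\YYY^n p$ with $n \geq 1$ strip one $\YYY$ using validity \ref{validity:X Y}, and analogously for $\YYY^n \bot$; on the bare atoms $p$ and $\bot$ simply return $\XXX p$ and $\XXX \bot$; distribute over $\neg$ and $\land$ by validities \ref{validity:X not} and \ref{validity:X and}; and set $\pi_{\XXX}(\con{\beta} \phi) = \con{\pi_{\XXX}(\beta)} \pi_{\XXX}(\phi)$, justified by the last validity in block~(1). The definition of $\pi_{\YYY}$ is parallel but uses validities \ref{validity:Y not}, \ref{validity:Y and}, and \ref{validity:Y X}; the novel feature is that pushing $\YYY$ through $\neg$ or through a leading $\XXX$ introduces a disjunct $\YYY \bot$, which still lies in $\Phi_{\ConXXYY}$ through derived $\lor$. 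A routine induction on formula structure shows that $\pi_{\XXX}(\phi) \leftrightarrow \XXX \phi$ and $\pi_{\YYY}(\phi) \leftrightarrow \YYY \phi$ are valid for every $\phi \in \Phi_{\ConXXYY}$, and in the same induction I verify that $\pi_{\XXX}, \pi_{\YYY}$ restrict to maps $\Phi_{\NXXYY} \to \Phi_{\NXXYY}$, so antecedents of $\con{\cdot}$ remain well-formed.

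With these in hand, define $\kappa$ on $\Phi_{\ConSHNBT}$ by: identity on $p$ and $\bot$; commutativity with $\neg$ and $\land$; $\kappa(\XXX \phi) = \pi_{\XXX}(\kappa(\phi))$; $\kappa(\YYY \phi) = \pi_{\YYY}(\kappa(\phi))$; and $\kappa(\con{\alpha} \phi) = \con{\kappa(\alpha)} \kappa(\phi)$. Since $\alpha \in \Phi_{\XY}$ contains no $\con{\cdot}$, the recursive call $\kappa(\alpha)$ lies in $\Phi_{\NXXYY}$, hence $\kappa(\con{\alpha} \phi) \in \Phi_{\ConXXYY}$. The validity $\models_\ConSHNBT \kappa(\phi) \leftrightarrow \phi$ then falls out by induction on $\phi$: the propositional cases are immediate; the temporal cases invoke the soundness of $\pi_{\XXX}$ and $\pi_{\YYY}$ together with the inductive hypothesis; and the $\con{\alpha} \phi$ case uses the inductive hypotheses on $\alpha$ and $\phi$ together with the easy observation that the truth clause for $\con{\cdot}$ depends on its subformulas only through the rule $\golf{\alpha}{i}$ and the truth set of $\phi$, so replacing either by a logical equivalent preserves truth.

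The main obstacle is bookkeeping rather than mathematical depth: one must verify at each clause that $\pi_{\XXX}, \pi_{\YYY}$ stay inside $\Phi_{\ConXXYY}$ and moreover restrict properly to $\Phi_{\NXXYY}$, that the $\YYY \bot$ disjuncts introduced by $\pi_{\YYY}$ expand correctly through derived connectives, and that every recursion descends on a strict subformula so that $\kappa, \pi_{\XXX}, \pi_{\YYY}$ are total and effectively computable in the sense required by the statement.
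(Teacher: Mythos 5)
Your proposal is correct and is precisely the route the paper indicates for this fact: the paper skips the proof but states that it follows from Lemma \ref{lemma:X and Y pass through everything}, i.e.\ by pushing $\XXX$ and $\YYY$ inward through $\neg$, $\land$, the opposite temporal operator, and $\con{\cdot}$ until they sit on atoms and $\bot$, exactly as your $\pi_{\XXX}$, $\pi_{\YYY}$, and $\kappa$ do. The only detail worth making explicit in the $\con{\alpha}\phi$ case is that validity of $\alpha \leftrightarrow \kappa(\alpha)$ over contextualized pointed models really does give $\golf{\alpha}{i} = \golf{\kappa(\alpha)}{i}$ (which quantifies over \emph{all} timelines); this holds because formulas of $\Phi_{\XY}$ are context-independent and every timeline is acceptable under the empty context.
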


\begin{definition}[Closed formulas] \label{def:Closed formulas}
Define closed formulas of $\Phi_{\ConSHNBT}$ as follows, where $\alpha \in \Phi_{\XY}$ and $\phi \in \Phi_{\ConSHNBT}$:

\[\chi ::= \con{\alpha} \phi \mid \neg \chi \mid (\chi \land \chi)\]
\end{definition}

The following fact indicates that the truth values of closed formulas at contextualized pointed models $(\MM,\CC,\pi,i)$ are independent of $\pi$:

\begin{fact} \label{fact:closed formulas}
Let $\chi$ be a closed formula. Let $\MM$ be a model, $\CC$ be a context, and $i$ be an instant. Then, $\MM,\CC,\pi,i \Vdash \chi$ if and only if $\MM,\CC,\pi',i \Vdash \chi$ for all $\pi$ and $\pi'$ in $\AT{\CC}$.
\end{fact}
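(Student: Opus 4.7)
The plan is to prove this by structural induction on the closed formula $\chi$, following the grammar given in Definition \ref{def:Closed formulas}. The base case is $\chi = \con{\alpha}\phi$ with $\alpha \in \Phi_{\XY}$ and $\phi \in \Phi_{\ConSHNBT}$, and the inductive cases are $\chi = \neg \chi'$ and $\chi = \chi_1 \land \chi_2$.

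For the base case, I would simply unfold the semantic clause for $\con{\alpha}\phi$: by Definition \ref{def:Semantics for Phi ConSONBT}, $\MM,\CC,\pi,i \Vdash \con{\alpha}\phi$ iff $\MM, \upf{\CC}{\alpha}{i}, \pi'', i \Vdash \phi$ for every $\pi'' \in \AT{\upf{\CC}{\alpha}{i}}$. The crucial observation is that the right-hand side makes no reference to $\pi$ whatsoever: the updated context $\upf{\CC}{\alpha}{i} = \upl{\CC}{\golf{\alpha}{i}}$ is determined by $\CC$, $i$, and the generated rule $\golf{\alpha}{i} = \{\pi \mid \MM,\CC,\pi,i \Vdash \alpha\}$, and the quantification ranges over $\AT{\upf{\CC}{\alpha}{i}}$. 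Hence the truth value at $(\MM,\CC,\pi,i)$ coincides with the truth value at $(\MM,\CC,\pi',i)$ for any other $\pi' \in \AT{\CC}$, which settles the base case.

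The inductive cases are routine. For $\chi = \neg \chi'$, by the semantics $\MM,\CC,\pi,i \Vdash \neg \chi'$ iff $\MM,\CC,\pi,i \not\Vdash \chi'$, which by the induction hypothesis applied to $\chi'$ equals $\MM,\CC,\pi',i \not\Vdash \chi'$, i.e.\ $\MM,\CC,\pi',i \Vdash \neg \chi'$. The conjunction case is entirely analogous, by applying the induction hypothesis to each conjunct.

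There is no substantive obstacle here; the fact is essentially forced by the design of the semantic clause for $\con{\alpha}\phi$, which universally quantifies over acceptable timelines of the updated context and so erases any dependence on the reference timeline. The only minor point worth flagging in the write-up is that $\golf{\alpha}{i}$ is well-defined independently of any reference timeline because $\alpha \in \Phi_{\XY}$ contains no $\con{\cdot}$ operator, so the set comprehension defining it uses the uniform notion of truth at $(\MM,\CC,\pi,i)$ over all timelines $\pi$.
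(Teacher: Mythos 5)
Your proof is correct, and it is the evident intended argument: the paper states this fact without proof, and the natural justification is exactly your induction on the grammar of closed formulas, with the base case resting on the observation that the truth clause for $\con{\alpha}\phi$ quantifies over $\AT{\upf{\CC}{\alpha}{i}}$ and never mentions the reference timeline. No gaps.
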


\begin{lemma} \label{lemma:partial reduction conditionals valid}
The following formulas are valid, where $\alpha, \beta$ and $\gamma$ are in $\Phi_{\XY}$:
\begin{enumerate}
\item $\con{\alpha} (\phi \land \psi) \leftrightarrow (\con{\alpha} \phi \land \con{\alpha} \psi)$ \label{validity:conditionals distribution conjunction}
\item $\con{\alpha} (\phi \lor \chi) \leftrightarrow (\con{\alpha} \phi \lor \con{\alpha} \chi)$, where $\chi$ is a closed formula \label{validity:conditionals distribution conditionally distribution}
\item $\con{\alpha} \con{\beta} \gamma \leftrightarrow \con{\alpha \land \beta} \gamma$ \label{validity:conditionals conditionals}
\item $\con{\alpha} \dcon{\beta} \gamma \leftrightarrow (\con{\alpha} \bot \lor \dcon{\alpha \land \beta} \gamma)$ \label{validity:conditionals dual conditionals}
\item $\con{\alpha} \beta \leftrightarrow \Box (\alpha \rightarrow \beta)$ \label{validity:conditionals box}
\end{enumerate}
\end{lemma}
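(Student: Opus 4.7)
The plan is to verify each of the five validities by unpacking the semantics of $\con{\cdot}$ and exploiting a few simple invariants that I would record first. Two preliminary observations are used repeatedly: (i) since $\alpha, \beta \in \Phi_{\XY}$, the rule $\golf{\alpha}{i}$ does not depend on the ambient context, and $\golf{\alpha \land \beta}{i} = \golf{\alpha}{i} \cap \golf{\beta}{i}$, whence $\AT{\upf{\upf{\CC}{\alpha}{i}}{\beta}{i}} = \AT{\CC} \cap \golf{\alpha}{i} \cap \golf{\beta}{i} = \AT{\upf{\CC}{\alpha \land \beta}{i}}$; (ii) the truth value of any $\Phi_{\XY}$-formula at $(\MM, \CC', \pi, i)$ is independent of $\CC'$.

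For Item \ref{validity:conditionals distribution conjunction}, I would unpack $\MM, \CC, \pi, i \Vdash \con{\alpha} (\phi \land \psi)$ as ``both $\phi$ and $\psi$ hold at $(\MM, \upf{\CC}{\alpha}{i}, \pi', i)$ for every $\pi' \in \AT{\upf{\CC}{\alpha}{i}}$'' and then invoke the standard fact that universal quantification commutes with conjunction. For Item \ref{validity:conditionals distribution conditionally distribution}, the key move is to apply Fact \ref{fact:closed formulas} to $\chi$ under the context $\upf{\CC}{\alpha}{i}$: since $\chi$ is closed, its truth value is uniform across $\AT{\upf{\CC}{\alpha}{i}}$. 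Hence, assuming $\con{\alpha}(\phi \lor \chi)$ holds, either $\chi$ holds at every $\pi' \in \AT{\upf{\CC}{\alpha}{i}}$, giving $\con{\alpha}\chi$, or $\chi$ fails at every such $\pi'$, in which case $\phi$ must hold at every such $\pi'$, giving $\con{\alpha}\phi$; the case $\AT{\upf{\CC}{\alpha}{i}} = \emptyset$ makes both disjuncts vacuously true, and the converse direction is immediate from the semantics of $\lor$.

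For Item \ref{validity:conditionals conditionals}, I would use observation (i) to conclude that both $\con{\alpha}\con{\beta}\gamma$ and $\con{\alpha \land \beta}\gamma$ reduce to the single condition ``$\gamma$ holds at every $\pi''$ in $\AT{\CC} \cap \golf{\alpha}{i} \cap \golf{\beta}{i}$'', while observation (ii) lets me freely move between contexts since $\gamma \in \Phi_{\XY}$. Item \ref{validity:conditionals dual conditionals} is analogous but with an existential; the subtle point is that if $\AT{\upf{\CC}{\alpha}{i}} = \emptyset$, then both $\con{\alpha}\bot$ and $\con{\alpha}\dcon{\beta}\gamma$ are vacuously true, whereas otherwise $\con{\alpha}\bot$ fails and $\con{\alpha}\dcon{\beta}\gamma$ reduces via (i) to $\dcon{\alpha \land \beta}\gamma$. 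Item \ref{validity:conditionals box} is exactly the content of Fact \ref{fact:square brackets and box}, so I would just cite it. I do not expect any of the items to pose a genuine obstacle; Item \ref{validity:conditionals distribution conditionally distribution} requires the most care, since distributivity of $\con{\alpha}$ over disjunction fails in general, and the closedness of $\chi$ is used precisely to force the disjunction to be uniformly decided across $\AT{\upf{\CC}{\alpha}{i}}$, which is where Fact \ref{fact:closed formulas} enters essentially.
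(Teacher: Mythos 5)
Your proposal is correct and follows essentially the same route as the paper: the same set identities ($\AT{\upf{(\upf{\CC}{\alpha}{i})}{\beta}{i}} = \AT{\upf{\CC}{(\alpha \land \beta)}{i}}$, i.e.\ Lemma \ref{lemma:basic}), the case split on $\AT{\upf{\CC}{\alpha}{i}} = \emptyset$ for Item \ref{validity:conditionals dual conditionals}, Fact \ref{fact:closed formulas} for Item \ref{validity:conditionals distribution conditionally distribution}, and a citation of Fact \ref{fact:square brackets and box} for Item \ref{validity:conditionals box}. The only cosmetic difference is that you reduce both sides of Items \ref{validity:conditionals conditionals} and \ref{validity:conditionals dual conditionals} to a common condition over $\AT{\CC} \cap \golf{\alpha}{i} \cap \golf{\beta}{i}$, where the paper chases explicit witnesses in each direction.
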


\noindent The proof for this result is put in Section \ref{sec:Proofs for expressivity} in the Appendix.

\begin{definition}[The language $\Phi_{\OneBXXYY}$] \label{def:language Phi OneBXXYY}
The language $\Phi_{\OneBXXYY}$ is defined as follows, where $\beta \in \Phi_{\NXXYY}$:
\[\phi ::= \beta \mid \neg \phi \mid (\phi \land \phi) \mid \Box \beta\]
\end{definition}

\begin{fact} \label{fact:reduction 2 semantics}
There is an effective function $\mu$ from $\Phi_{\ConXXYY}$ to $\Phi_{\OneBXXYY}$ such that for every $\phi \in \Phi_{\ConXXYY}$, $\models_\ConSHNBT \phi \leftrightarrow \mu(\phi)$.
\end{fact}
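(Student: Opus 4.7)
The plan is to define $\mu$ by structural induction on $\phi \in \Phi_{\ConXXYY}$, delegating the conditional case to an auxiliary routine $\nu(\beta, \phi')$ that reduces $\con{\beta}\phi'$ to $\Phi_{\OneBXXYY}$ whenever $\beta \in \Phi_{\NXXYY}$ and $\phi' \in \Phi_{\OneBXXYY}$. I will set $\mu(\phi) = \phi$ on the atoms $\XXX^n p, \XXX^n \bot, \YYY^n p, \YYY^n \bot$, propagate $\mu$ transparently through $\neg$ and $\land$, and set $\mu(\con{\beta}\phi) = \nu(\beta, \mu(\phi))$. Soundness $\models_\ConSHNBT \phi \leftrightarrow \mu(\phi)$ then follows by a straightforward induction, using that a semantic equivalence of $\phi$ with $\mu(\phi)$ extends beneath $\con{\beta}$: this is immediate from the truth condition of $\con{\beta}\cdot$, which evaluates its argument at the updated context.

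The crux is defining $\nu$. Given $\phi' \in \Phi_{\OneBXXYY}$, I enumerate its modal subformulas $\Box \delta_1, \ldots, \Box \delta_k$. For each $S \subseteq \{1, \ldots, k\}$, let $\tau_S = \bigwedge_{i \in S} \Box \delta_i \land \bigwedge_{i \notin S} \neg \Box \delta_i$ (a closed formula by Definition \ref{def:Closed formulas}) and let $\gamma_S \in \Phi_{\NXXYY}$ be the Boolean formula obtained from $\phi'$ by substituting $\Box \delta_i = \top$ when $i \in S$ and $\Box \delta_i = \bot$ otherwise. Since the $\tau_S$'s partition (exactly one is true per contextualized pointed model, by Fact \ref{fact:closed formulas}), the equivalence $\phi' \leftrightarrow \bigwedge_S (\neg \tau_S \lor \gamma_S)$ is valid. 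Applying $\con{\beta}$ and pushing it inward with item \ref{validity:conditionals distribution conjunction} of Lemma \ref{lemma:partial reduction conditionals valid} to distribute over $\land$, then item \ref{validity:conditionals distribution conditionally distribution} (legal because $\neg \tau_S$ is closed) to distribute over each $\lor$, yields $\con{\beta}\phi' \leftrightarrow \bigwedge_S (\con{\beta}\neg \tau_S \lor \con{\beta}\gamma_S)$. Item \ref{validity:conditionals box} reduces each $\con{\beta}\gamma_S$ to $\Box(\beta \rightarrow \gamma_S)$. The term $\con{\beta}\neg \tau_S$ distributes further, since its argument is a disjunction of closed literals, into components $\con{\beta}\Box \delta_i$ and $\con{\beta}\neg \Box \delta_i$; writing $\Box \delta_i$ as $\con{\top}\delta_i$ and $\neg \Box \delta_i$ as $\dcon{\top}\neg \delta_i$, items \ref{validity:conditionals conditionals} and \ref{validity:conditionals box} yield $\con{\beta}\Box \delta_i \leftrightarrow \Box(\beta \rightarrow \delta_i)$, and item \ref{validity:conditionals dual conditionals} yields $\con{\beta}\neg \Box \delta_i \leftrightarrow \Box \neg \beta \lor \Diamond(\beta \land \neg \delta_i)$. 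All of these land in $\Phi_{\OneBXXYY}$.

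The main obstacle is the disjunction case $\con{\beta}(\phi_1 \lor \phi_2)$ where neither disjunct is closed: item \ref{validity:conditionals distribution conditionally distribution} does not directly apply, so naive pushing of $\con{\beta}$ inward gets stuck. The case-partition trick above circumvents this by rewriting $\phi'$ as a conjunction of disjunctions each of which contains the closed clause $\neg \tau_S$, so that item \ref{validity:conditionals distribution conditionally distribution} becomes applicable throughout. The bookkeeping to finish is a routine semantic verification that $\phi' \leftrightarrow \bigwedge_S (\neg \tau_S \lor \gamma_S)$ is valid, which exploits the mutual exclusivity of the $\tau_S$'s together with Fact \ref{fact:closed formulas}; since $2^k$ is finite, $\nu$ and hence $\mu$ are effectively computable.
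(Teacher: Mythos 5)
Your proposal is correct and follows essentially the same route as the paper: both push $\con{\beta}$ inward using items 1--5 of Lemma \ref{lemma:partial reduction conditionals valid}, with the crucial move being a conjunctive normal form whose clauses pair a closed (modal) disjunct with a $\Phi_{\NXXYY}$ disjunct so that item \ref{validity:conditionals distribution conditionally distribution} becomes applicable, after which items \ref{validity:conditionals conditionals}, \ref{validity:conditionals dual conditionals} and \ref{validity:conditionals box} finish the job. The only difference is organizational: you work bottom-up by structural induction and obtain the normal form via a $2^k$ case split on the modal subformulas of the already-reduced consequent, whereas the paper works top-down on the nesting depth of $\con{\cdot}$ with a DNF-based clause form; the identities invoked are the same.
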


\noindent The proof for this result is put in Section \ref{sec:Proofs for expressivity} in the Appendix.

By Facts \ref{fact:reduction 1 semantics} and \ref{fact:reduction 2 semantics}, we know $\Phi_{\ConSHNBT}$ is as expressive as $\Phi_{\OneBXXYY}$.

\subsection{Axiomatization}
\label{subsec:Axiomatization}

We use $\Phi_{\PL}$ to denote the language of the Propositional Logic.

\begin{definition}[Axiomatic system $\ConSHNBT$] \label{def:Axiomatic system ConSONBT}

Define an axiomatic system $\ConSHNBT$ as follows:

\medskip

\noindent Axioms:

\begin{enumerate}
\item Axioms for the Propositional Logic
\item Axioms for $\XXX$:
\begin{enumerate}
\item $\XXX \neg \phi \leftrightarrow \neg \XXX \phi$ \label{E axiom:X not}
\item $\XXX (\phi \land \psi) \leftrightarrow (\XXX \phi \land \XXX \psi)$ \label{E axiom:X and}
\item $\XXX \YYY \phi \leftrightarrow \phi$ \label{E axiom:X Y}
\item $\XXX \con{\alpha} \phi \leftrightarrow \con{\XXX \alpha} \XXX \phi$ \label{E axiom:X con}
\item $\neg \XXX \neg \top$ \label{E axiom:X dual top}
\end{enumerate}
\item Axioms for $\YYY$:
\begin{enumerate}
\item $\YYY \neg \phi \leftrightarrow (\YYY \bot \lor \neg \YYY \phi)$ \label{E axiom:Y not}
\item $\YYY (\phi \land \psi) \leftrightarrow (\YYY \phi \land \YYY \psi)$ \label{E axiom:Y and}
\item $\YYY \XXX \phi \leftrightarrow (\YYY \bot \lor \phi)$ \label{E axiom:Y X}
\item $\YYY \con{\alpha} \phi \leftrightarrow \con{\YYY \alpha} \YYY \phi$ \label{E axiom:Y con}
\item $\Diamond \YYY \bot \rightarrow (\Diamond \alpha \rightarrow \alpha)$, where $\alpha \in \Phi_{\PL}$ \label{E axiom:if Y bot}
\end{enumerate}
\item Axioms for partial reduction of $\con{\cdot}$, where $\alpha, \beta$ and $\gamma$ are in $\Phi_{\XY}$:
\begin{enumerate}
\item $\con{\alpha} (\phi \land \psi) \leftrightarrow (\con{\alpha} \phi \land \con{\alpha} \psi)$ \label{E axiom:con and}
\item $\con{\alpha} (\phi \lor \chi) \leftrightarrow (\con{\alpha} \phi \lor \con{\alpha} \chi)$, where $\chi$ is a closed formula \label{E axiom:con or}
\item $\con{\alpha} \con{\beta} \gamma \leftrightarrow \con{\alpha \land \beta} \gamma$ \label{E axiom:con con}
\item $\con{\alpha} \dcon{\beta} \gamma \leftrightarrow \dcon{\alpha \land \beta} \gamma$ \label{E axiom:con dual con}
\item $\con{\alpha} \beta \leftrightarrow \Box (\alpha \rightarrow \beta)$ \label{axiom:conditionals box}
\end{enumerate}
\item Axioms for $\Box$: $\Box \alpha \rightarrow \alpha$, where $\alpha \in \Phi_{\XY}$ \label{E axiom:A T}
\end{enumerate}

\noindent Inference rules:

\begin{enumerate}
\item Modus ponens: from $\phi$ and $\phi \rightarrow \psi$, we can get $\psi$;
\item Generalization of $\XXX$: from $\phi$, we can get $\XXX \phi$;
\item Generalization of $\YYY$: from $\phi$, we can get $\YYY \phi$;
\item Generalization of $\Box$: from $\alpha$, we can get $\Box \alpha$, where $\alpha \in \Phi_{\XY}$;
\item Replacement of equivalent subformulas: from $\psi \leftrightarrow \psi'$, we can get $\phi \leftrightarrow \phi'$, where $\psi$ is a subformula of $\phi$ and $\phi'$ is the result of replacing $\psi$ by $\psi'$ in $\phi$.
\end{enumerate}
\end{definition}

\noindent We use $\vdash_{\ConSHNBT} \phi$ to indicate $\phi$ is \defs{derivable} in this system.

\begin{theorem} \label{theorem:soundness completeness}
The axiomatic system $\ConSHNBT$ is sound and complete with respect to the set of valid formulas of $\Phi_{\ConSHNBT}$.
\end{theorem}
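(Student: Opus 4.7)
The proof splits naturally into soundness and completeness, and I would handle each axiom schema and each rule in turn.

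\textbf{Soundness.} I would verify each axiom is valid and each rule preserves validity. The majority of the axioms are already underwritten by earlier results: axioms 2(a)--(d) and 3(a)--(d) come straight from Lemma \ref{lemma:X and Y pass through everything}; axioms 4(a)--(d) correspond to Lemma \ref{lemma:partial reduction conditionals valid}; axiom 4(e) is the $\Phi_{\XY}$-restricted direction of Fact \ref{fact:square brackets and box}. For the rest: axiom 2(e) is immediate from the seriality of $<$; axiom 3(e) holds because $\YYY \bot$ is satisfiable only at the root, where every timeline passes through the same state and so propositional truth values are uniform along all acceptable timelines, forcing $\Diamond \alpha$ and $\alpha$ to coincide for $\alpha \in \Phi_{\PL}$; axiom 5 holds because Definition \ref{def:Contextualized models and contextualized pointed models} guarantees $\pi \in \AT{\CC}$. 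Soundness of the inference rules is standard.

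\textbf{Completeness via reduction.} My plan is to show that $\ConSHNBT$ can provably reduce every formula of $\Phi_{\ConSHNBT}$ to an equivalent one in $\Phi_{\OneBXXYY}$, and then prove completeness for that simpler fragment directly. The first reduction mimics Fact \ref{fact:reduction 1 semantics}: using axioms 2(a)--(d), 3(a)--(d) and the replacement rule, I drive $\XXX$ and $\YYY$ inward until they apply only to atoms or $\bot$, yielding a formula in $\Phi_{\ConXXYY}$. The second mimics Fact \ref{fact:reduction 2 semantics}: using axioms 4(a)--(e), I eliminate $\con{\cdot}$ outward by pushing it through Boolean connectives against the closed-formula case split, merging nested conditionals, and finally rewriting $\con{\alpha}\beta$ with $\alpha,\beta \in \Phi_{\XY}$ as $\Box(\alpha \rightarrow \beta)$. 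Each rewriting step is a provable biconditional, and the replacement rule lifts it to the whole formula, so every $\phi$ is provably equivalent to some $\phi^* \in \Phi_{\OneBXXYY}$, and it suffices to prove $\models_\ConSHNBT \phi^* \Rightarrow \vdash_\ConSHNBT \phi^*$.

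\textbf{Completeness for $\Phi_{\OneBXXYY}$.} This is the substantive step. A formula here is a Boolean combination of flat temporal literals $\XXX^n p$, $\YYY^n p$, $\YYY^n \bot$ and modal literals $\Box \beta$ with $\beta$ modality-free. Given a $\ConSHNBT$-consistent $\phi^* \in \Phi_{\OneBXXYY}$, I would build a satisfying contextualized pointed model as follows: the $\YYY^n \bot$ literals fix the current instant $i$; the $\YYY^n p$ literals (together with axiom 3(e) when $i=0$) determine the unique past trace from the root to the present state; for each literal of the form $\neg \Box \beta_k$ occurring positively in $\phi^*$, I add a witness timeline branching from the present state that falsifies $\beta_k$, while arranging that each positive $\Box \beta_j$ literal is satisfied on every timeline I include; setting the context $\CC = \{T\}$ for $T$ the finite set of constructed timelines yields $\AT{\CC} = T$, and evaluation at the designated "main" timeline verifies $\phi^*$. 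This is a Hintikka-style construction organised around the finitely many $\Box$ literals. The main obstacle is exactly this step: realising all the $\Box \beta_j$ and $\neg \Box \beta_k$ literals simultaneously in a single branching-time model that respects the structural constraints of Definition \ref{def:Models for Phi ConSONBT}---unique root, unique past sequence for every state, and seriality---while staying coherent with the rigid-past and root-uniformity constraints imposed by $\YYY^n p$ literals and axiom 3(e). Once the right Hintikka-style normal form for $\Phi_{\OneBXXYY}$ and a disciplined rewriting order are in place, the soundness checks and the two reduction lemmas are essentially bookkeeping.
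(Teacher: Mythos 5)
Your overall architecture matches the paper's: soundness by inspection, then completeness by provably rewriting every formula into the fragment $\Phi_{\OneBXXYY}$ (the paper's Lemmas \ref{lemma:reduction 1 system} and \ref{lemma:reduction 2 system}, mirroring Facts \ref{fact:reduction 1 semantics} and \ref{fact:reduction 2 semantics}) and establishing completeness for that fragment by a direct model construction. The gap is in the fragment construction itself. You propose to add, for each $\neg \Box \beta_k$, ``a witness timeline branching from the present state,'' with a single past trace shared by all timelines up to the present instant. That is the geometry of \emph{futural} necessity, not of this logic: here $\Box$ at instant $i$ quantifies over all acceptable timelines, which share only the root, so witnesses may (and in general must) already diverge from the main timeline at instant $1$. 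Concretely, $p \land \Diamond \neg p \land \neg \YYY \bot$ is consistent in $\OneBXXYY$ (axiom 3(e) does not fire since $\Diamond \YYY\bot \leftrightarrow \YYY\bot$) and is satisfiable at instant $1$ of a model whose two branches split at the root; your construction, in which every included timeline passes through the present state, can never satisfy it, and would instead validate $\alpha \rightarrow \Box\alpha$ for propositional $\alpha$ at every instant --- exactly what the paper insists is invalid for strong historical necessity.

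Relatedly, the part you defer as ``the main obstacle'' is precisely where the paper does its real work. It normalizes a consistent disjunct to a single \emph{core formula} $\Box \HX \land \Diamond \IX_1 \land \dots \land \Diamond \IX_i \land \LX$, chooses an \emph{atomic sequence} pairing each $\Diamond$-witness (and the main branch) with a disjunct of the DNF of $\HX$ so that $\Box\HX$ holds on every constructed branch, realizes each element as a separate linear model (Lemma \ref{lemma:the bridge}), and then merges: if the $\YYY\bot$-information forces $i=0$, all branches must share the root, so their propositional literals must be jointly consistent --- this is exactly what axiom 3(e) delivers in Lemma \ref{lemma:from consistency to consistency} and what condition (2) of Lemma \ref{lemma:from satisfiability to satisfiability} checks; otherwise a fresh root is prefixed and the branches diverge immediately after it. Your sketch correctly names the ingredients (the $\Box\beta_j$ constraints on every branch, the role of $\YYY^n\bot$ and axiom 3(e)) but does not supply this case split, and the branching point you do specify is the wrong one. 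Fixing the construction means moving the divergence point of the witnesses back to the root and isolating the $i=0$ case as the only place where root-uniformity of propositional facts constrains the witnesses.
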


\noindent We put the proof for this result in Section \ref{sec:Proofs for axiomatization} in the Appendix.

\section{Analyses of the nontheological version of Lavenham's argument}
\label{sec:Analyses of the nontheological version of Lavenham's argument}

\subsection{The argument and our analyses of it}
\label{subsec:The argument and our analyses of it}

The nontheological version of Lavenham's argument from \cite{ohrstrom_future_2020} goes as follows, where E indicates an \emph{event} and non-E is its \emph{negation}.

\begin{enumerate}
\item Either E is going to take place tomorrow or non-E is going to take place tomorrow. (Assumption).
\item If a proposition about the past is true, then it is now necessary. (Assumption).
\item If E is going to take place tomorrow, then yesterday E would take place in two days. (Assumption).
\item If E is going to take place tomorrow, then it is now necessary that yesterday E would take place in two days. (Follows from 2 and 3).
\item If it is now necessary that yesterday E would take place in two days, then it is now necessary that E is going to take place tomorrow. (Assumption).
\item If E is going to take place tomorrow, then E is necessarily going to take place tomorrow. (Follows from 4 and 5).
\item If non-E is going to take place tomorrow, then non-E is necessarily going to take place tomorrow. (Follows by the same kind of reasoning as 6).
\item Either E is necessarily going to take place tomorrow or non-E is necessarily going to take place tomorrow. (Follows from 1, 6 and 7).
\item Therefore, what is going to happen tomorrow is going to happen with necessity. (Follows from 8).
\end{enumerate}

We treat \emph{E/non-E taking place} as \emph{$p$/$\neg p$ being true}, where $p$ is a proposition. We notice that \emph{events taking place} and \emph{propositions being true} might not be the same, but we do not go into this issue in this work. We suppose the notion of necessity in this argument is strong historical necessity. This is an issue deserving some philosophical consideration. Again, we do not go into it here.

We respectively consider how the nine statements in the argument are expressed in our logic.

\begin{enumerate}
\item We express Statement 1 as $\XXX p \lor \XXX \neg p$. In the literature, this statement is called \emph{the principle of excluded future middle}.
\item Statement 2 is a schema, and we cannot express it. However, only two of its instantiations are used in the argument: (a) \emph{if it is true that yesterday E would take place in two days, then it is now necessary that yesterday E would take place in two days;} (b) \emph{if it is true that yesterday non-E would take place in two days, then it is now necessary that yesterday non-E would take place in two days.} The two statements are conditional strong historical necessity, and we express them as $\con{\YYY \XXX \XXX p} \YYY \XXX \XXX p$ and $\con{\YYY \XXX \XXX \neg p} \YYY \XXX \XXX \neg p$, respectively. Statement 2 is called \emph{the principle of the necessity of the past} in the literature.
\item Statement 3 says that \emph{E is going to take place tomorrow} logically implies \emph{yesterday E would take place in two days}. It is a material implication, and we express it as $\XXX p \rightarrow \YYY \XXX \XXX p$.
\item Statement 4 is conditional strong historical necessity, and we express it as $\con{\XXX p} \YYY \XXX \XXX p$.
\item Statement 5 says that \exs{it is now necessary that yesterday E would take place in two days} logically implies \emph{it is now necessary that E is going to take place tomorrow}. It is a material implication, and we express it as $\Box \YYY \XXX \XXX p \rightarrow \Box \XXX p$.
\item Statement 6 is conditional strong historical necessity, and we express it as $\con{\XXX p} \XXX p$.
\item Statement 7 is conditional strong historical necessity, and we express it as $\con{\XXX \neg p} \XXX \neg p$.
\item Statement 8 is expressed as $\Box \XXX p \lor \Box \XXX \neg p$.
\item Statement 9 is a schema we cannot express.
\end{enumerate}

Suppose we just consider the two instantiations of Statement 2 and do not consider Statement 9. Then, the argument can be formalized in our logic, as Figure \ref{fig:Representation of the core part of the argument in ConSONBT} shows.

\begin{figure}[h]
\centering
\begin{forest}
for tree=
{
rounded corners,
text centered,
grow=90,
forked edge,
l sep=8mm,
fork sep=4mm,
s sep=4mm
}
[{8: $\Box \XXX p \lor \Box \XXX \neg p$},draw
[{7: $\con{\XXX \neg p} \XXX \neg p$},draw
[{$\vdots$}]
]
[{6: $\con{\XXX p} \XXX p$},draw
[{5: $\Box \YYY \XXX \XXX p \rightarrow \Box \XXX p$},fill=gray!10,draw]
[{4: $\con{\XXX p} \YYY \XXX \XXX p$},draw
[{3: $\XXX p \rightarrow \YYY \XXX \XXX p$},fill=gray!10,draw]
[{2: $\con{\YYY \XXX \XXX p} \YYY \XXX \XXX p$},fill=gray!10,draw]
]
]
[{1: $\XXX p \lor \XXX \neg p$},fill=gray!10,draw]
]
\end{forest}

\caption{Formalization of the nontheological version of Lavenham's argument in $\ConSHNBT$.}
\label{fig:Representation of the core part of the argument in ConSONBT}
\end{figure}
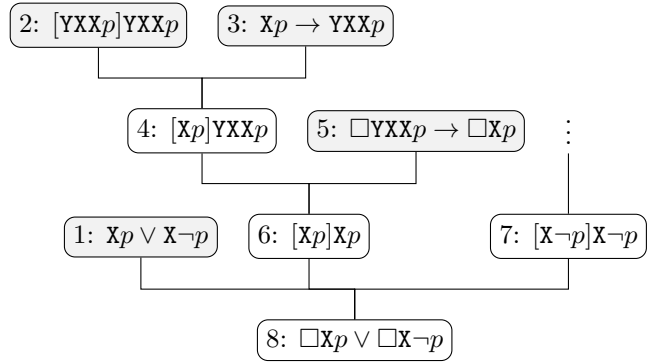

It can be verified that all premises of the argument are valid but its conclusion is not. Therefore, the argument is not sound.

\subsection{Comparisons to some related works}
\label{subsec:Comparisons to some related works}

Scholars have developed various formal theories rejecting similar arguments for futural determinism. These theories, among others, include the Peircean temporal logic by Prior \cite{prior_past_1967}, the Ockhamist temporal logic by Prior \cite{prior_past_1967}, the supervaluationist theory by Thomason \cite{thomason_indeterminist_1970}, the true futurist theory by {\O}hrstr{\o}m \cite{ohrstrom_problems_1981}, the relativist theory by MacFarlane \cite{macfarlane_future_2003}, and the logic $\mathsf{LTC}$ for temporal conditionals by Ju, Grilletti and Goranko \cite{ju_logic_2018}, which contains a solution to Aristotle's sea battle puzzle.

All of these theories use branching time models, as we do.

All these theories treat the necessity in the arguments they consider as the futural necessity. This can be seen from the fact that $p \rightarrow \Box p$, where $p$ is an atomic proposition, is valid in all of them. We treat the necessity in the nontheological version of Lavenham's argument as strong historical necessity, and $p \rightarrow \Box p$ is not valid in our logic.

All of these theories except $\mathsf{LTC}$ formalize the principle of the necessity of the past as a material implication, which is not generally valid in them. This principle is expressed as $\con{\YYY \phi} \YYY \phi$ in our logic, which is valid.

All these theories except $\mathsf{LTC}$ accept the soundness of the arguments they consider and argue that their premises are not all valid. By our logic, all premises of the nontheological version of Lavenham's argument are valid, but the argument is not sound.

Ju, Grilletti and Goranko \cite{ju_logic_2018} focused on the following version of Aristotle's sea battle puzzle, which is from \cite{garrett_what_2017}\footnote{The texts in \emph{On Interpretation} concerning Aristotle's argument for future determinism are unclear, and the argument has different versions in the literature. The texts of \emph{On Interpretation} show that Aristotle's argument also involves the past. So, this argument simplifies things.}. \emph{Either there will be a sea battle tomorrow or not. If there is a sea battle tomorrow, then it is necessarily so. If there is no sea battle tomorrow, then it is necessarily so. Thus, either there will necessarily be a sea battle tomorrow or there will necessarily be no sea battle tomorrow.} This argument is formalized in $\mathsf{LTC}$ as follows, where $s$ indicate \emph{there is a sea battle}: $\XXX s \lor \XXX \neg s,$ $[\XXX s] \mathbf{A} \XXX s,$ $[\XXX \neg s] \mathbf{A} \XXX \neg s$ $\models$ $\mathbf{A} \XXX s \lor \mathbf{A} \XXX \neg s$. The three premises are valid, but the conclusion is not in $\mathsf{LTC}$. Therefore, the argument is not sound. This solution and ours are similar.

\section{Conclusion}
\label{sec:Conclusion}

In this work, we presented a logic for conditional strong historical necessity in branching time and applied it to analyze the nontheological version of Lavenham's argument for future determinism. The approach for the logic is as follows. The agent accepts a set of indefeasible ontic rules concerning how the world evolves over time, called a context. A context determines a set of acceptable timelines, which is the domain for strong historical necessity. When evaluating a sentence with conditional strong historical necessity with respect to a context, we put the rule corresponding to its antecedent to the context and then check whether its consequent holds for all acceptable timelines by the new context with respect to the new context. By our logic, the argument is unsound.

The logic we propose is not very expressive. It contains only two simple temporal operators: the next instant operator and the last instant operator. It will be natural to introduce the operators \emph{until} and \emph{since} in the future.

\subsection*{Acknowledgments}

I want to thank Valentin Goranko for his great help with this project. Thanks also go to the audience of logic seminars at Beijing Normal University and Hubei University. This research was supported by the National Social Science Foundation of China (No. 19BZX137).

\bibliographystyle{alpha}
\bibliography{Modalities,Special-items}

\appendix

\newcommand{\ecai}{\AT{\upf{\CC}{\alpha}{i}}}

\newcommand{\mcpiit}{\MM, \CC, \pi, i \Vdash}
\newcommand{\mcpiif}{\MM, \CC, \pi, i \not \Vdash}
\newcommand{\mcpipit}{\MM, \CC, \pi', i \Vdash}
\newcommand{\mcpipif}{\MM, \CC, \pi', i \not \Vdash}
\newcommand{\mcapipit}{\MM, \upf{\CC}{\alpha}{i}, \pi', i \Vdash}
\newcommand{\mcapipif}{\MM, \upf{\CC}{\alpha}{i}, \pi', i \not \Vdash}
\newcommand{\mcapippit}{\MM, \upf{\CC}{\alpha}{i}, \pi'', i \Vdash}
\newcommand{\mcapippif}{\MM, \upf{\CC}{\alpha}{i}, \pi'', i \not \Vdash}

\section{Proofs for expressivity}
\label{sec:Proofs for expressivity}

\begin{lemma} \label{lemma:basic}
Fix a model.

\begin{enumerate}
\item $\AT{\upf{(\upf{\CC}{\alpha}{i})}{\beta}{i}} = \AT{\upf{\CC}{(\alpha \land \beta)}{i}}$ \label{item:basic alpha plus beta alpha and beta}
\item $\AT{\upf{\CC}{\alpha}{i}} = \AT{\CC} \cap \golf{\alpha}{i}$ \label{item:basic alpha alpha}
\end{enumerate}
\end{lemma}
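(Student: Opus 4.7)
The plan is to unfold the definitions of $\AT{\cdot}$ and $\upf{\cdot}{\cdot}{\cdot}$ and reduce both items to elementary set theory, with one auxiliary observation about $\golf{\cdot}{\cdot}$.

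First I would record a preliminary observation: since $\alpha, \beta \in \Phi_{\XY}$ do not contain the operator $\con{\cdot}$, their truth values at a pointed model are independent of the context. Consequently $\golf{\alpha}{i}$ (and likewise $\golf{\beta}{i}$ and $\golf{\alpha \land \beta}{i}$) does not change if we replace $\CC$ by $\upf{\CC}{\alpha}{i}$ or by any other context. This observation is needed to make sense of the iterated update on the left-hand side of Item 1.

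For Item 2, I would simply chase the definitions: $\upf{\CC}{\alpha}{i} = \CC \cup \{\golf{\alpha}{i}\}$, so
\[
\AT{\upf{\CC}{\alpha}{i}} = \bigcap (\CC \cup \{\golf{\alpha}{i}\}) = \left(\bigcap \CC\right) \cap \golf{\alpha}{i} = \AT{\CC} \cap \golf{\alpha}{i}.
\]
For Item 1, I would apply Item 2 twice to the left-hand side to get $\AT{\CC} \cap \golf{\alpha}{i} \cap \golf{\beta}{i}$, and once to the right-hand side to get $\AT{\CC} \cap \golf{\alpha \land \beta}{i}$. The remaining equality $\golf{\alpha}{i} \cap \golf{\beta}{i} = \golf{\alpha \land \beta}{i}$ follows directly from the clause $\mcpiit \alpha \land \beta \iff \mcpiit \alpha \text{ and } \mcpiit \beta$ in the semantics.

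There is no real obstacle here; the only point requiring any care is the implicit use of context-independence of $\golf{\alpha}{i}$ for $\alpha \in \Phi_{\XY}$, which justifies writing $\golf{\alpha}{i}$ unambiguously on both sides of the equation in Item 1 even though the two sides involve different contexts. Once that is noted, both items are one-line computations in set theory together with the conjunction clause of the semantics.
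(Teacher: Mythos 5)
Your proof is correct and is exactly the routine definition-unfolding argument the paper has in mind (the paper omits the proof, remarking only that the result is easy to show). You rightly flag the one point that deserves explicit mention, namely that $\golf{\beta}{i}$ is context-independent for $\beta \in \Phi_{\XY}$, which legitimizes applying Item 2 inside the already-updated context in Item 1.
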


\noindent This result is easy to show.

\setcounter{rlemma}{1}

\begin{rlemma}
The following formulas are valid, where $\alpha, \beta$ and $\gamma$ are in $\Phi_{\XY}$:
\begin{enumerate}
\item $\con{\alpha} (\phi \land \psi) \leftrightarrow (\con{\alpha} \phi \land \con{\alpha} \psi)$ \label{r validity:conditionals distribution conjunction}
\item $\con{\alpha} (\phi \lor \chi) \leftrightarrow (\con{\alpha} \phi \lor \con{\alpha} \chi)$, where $\chi$ is a closed formula \label{r validity:conditionals distribution conditionally distribution}
\item $\con{\alpha} \con{\beta} \gamma \leftrightarrow \con{\alpha \land \beta} \gamma$ \label{r validity:conditionals conditionals}
\item $\con{\alpha} \dcon{\beta} \gamma \leftrightarrow (\con{\alpha} \bot \lor \dcon{\alpha \land \beta} \gamma)$ \label{r validity:conditionals dual conditionals}
\item $\con{\alpha} \beta \leftrightarrow \Box (\alpha \rightarrow \beta)$ \label{r validity:conditionals box}
\end{enumerate}
\end{rlemma}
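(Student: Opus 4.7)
My plan is to treat the five validities in turn, unfolding the truth clause for $\con{\alpha}$ each time and exploiting two auxiliary observations: (i) for any $\gamma \in \Phi_{\XY}$, the truth of $\gamma$ at $(\MM,\CC,\pi,i)$ does not depend on $\CC$ (an immediate induction on $\gamma$, since no clause for $\Phi_{\XY}$-operators mentions $\CC$), and (ii) Lemma~\ref{lemma:basic}, which provides $\AT{\upf{(\upf{\CC}{\alpha}{i})}{\beta}{i}} = \AT{\upf{\CC}{(\alpha \land \beta)}{i}}$ and $\AT{\upf{\CC}{\alpha}{i}} = \AT{\CC} \cap \golf{\alpha}{i}$.

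Part~\ref{r validity:conditionals distribution conjunction} follows directly because the universal quantifier in the truth clause of $\con{\alpha}$ distributes over conjunction. For part~\ref{r validity:conditionals distribution conditionally distribution}, I set $D = \AT{\upf{\CC}{\alpha}{i}}$ and apply Fact~\ref{fact:closed formulas} under the context $\upf{\CC}{\alpha}{i}$ to conclude that the closed formula $\chi$ has a uniform truth value across all $\pi' \in D$. If it is uniformly true, $\con{\alpha} \chi$ holds; otherwise, it is uniformly false and the assumption that $\phi \lor \chi$ holds everywhere on $D$ collapses to $\phi$ holding everywhere, giving $\con{\alpha} \phi$. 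The converse direction is routine, using that $\phi$ and $\chi$ each entail $\phi \lor \chi$.

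For part~\ref{r validity:conditionals conditionals}, I unfold $\con{\alpha} \con{\beta} \gamma$ and use observation~(i) to note that both the inner updated context and the truth of $\gamma$ are independent of the outer witness $\pi'$; Lemma~\ref{lemma:basic}(\ref{item:basic alpha plus beta alpha and beta}) then identifies $\AT{\upf{(\upf{\CC}{\alpha}{i})}{\beta}{i}}$ with $\AT{\upf{\CC}{\alpha \land \beta}{i}}$, making the two sides match. The empty-case, where $D = \emptyset$, is handled by Lemma~\ref{lemma:basic}(\ref{item:basic alpha alpha}), which forces $\AT{\upf{\CC}{\alpha \land \beta}{i}} = \emptyset$ too, so both sides are vacuously true. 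Part~\ref{r validity:conditionals dual conditionals} is treated by the same two-case split on whether $D$ is empty: when $D = \emptyset$, the disjunct $\con{\alpha} \bot$ holds vacuously, matching the vacuously true $\con{\alpha} \dcon{\beta} \gamma$; when $D \neq \emptyset$, $\con{\alpha} \bot$ is false, the outer universal over $D$ in $\con{\alpha} \dcon{\beta} \gamma$ becomes redundant since the inner existential is $\pi'$-independent, and the statement reduces to $\dcon{\alpha \land \beta} \gamma$ via Lemma~\ref{lemma:basic}(\ref{item:basic alpha plus beta alpha and beta}). Part~\ref{r validity:conditionals box} is precisely Fact~\ref{fact:square brackets and box}, already established in the main text.

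The main obstacle is the bookkeeping around the edge case where $\AT{\upf{\CC}{\alpha}{i}}$ is empty in parts~\ref{r validity:conditionals conditionals} and~\ref{r validity:conditionals dual conditionals}, as the asymmetry in the statement of part~\ref{r validity:conditionals dual conditionals} (with the extra disjunct $\con{\alpha} \bot$) exists precisely to absorb this case. Once the preliminary observation that $\Phi_{\XY}$-formulas have context-independent semantics is in hand, the remaining work is routine semantic unfolding.
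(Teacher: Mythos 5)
Your proposal is correct and follows essentially the same route as the paper: items 1 and 2 by direct unfolding (the paper simply calls these easy, while you supply the Fact~\ref{fact:closed formulas} argument for item 2), item 5 by appeal to Fact~\ref{fact:square brackets and box}, and items 3 and 4 via Lemma~\ref{lemma:basic}(\ref{item:basic alpha plus beta alpha and beta}) with the same case split on whether $\AT{\upf{\CC}{\alpha}{i}}$ is empty in item 4. Your explicit observation that $\Phi_{\XY}$-formulas have context-independent truth values is a point the paper uses only implicitly, and making it explicit is a small improvement rather than a divergence.
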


\begin{proof}
The first two items are easy, the last one is already shown before, and we just show the third and fourth ones.

3. Fix a contextualized pointed model $(\MM,\CC,\pi,i)$. Note $\AT{\upf{(\upf{\CC}{\alpha}{i})}{\beta}{i}} = \AT{\upf{\CC}{(\alpha \land \beta)}{i}}$ by Item \ref{item:basic alpha plus beta alpha and beta} in Lemma \ref{lemma:basic}.

Assume $\mcpiif \con{\alpha}\con{\beta} \gamma$. Then, $\mcapipif \con{\beta} \gamma$ for some $\pi' \in \ecai$. Then, $\MM,\upf{(\upf{\CC}{\alpha}{i})}{\beta}{i},\pi'',i \not \Vdash \gamma$ for some $\pi'' \in \AT{\upf{(\upf{\CC}{\alpha}{i})}{\beta}{i}}$. Then, $\pi'' \in \AT{\upf{\CC}{(\alpha \land \beta)}{i}}$. Then, $\MM,\upf{\CC}{(\alpha \land \beta)}{i},\pi'',i \not \Vdash \gamma$. Then, $\mcpiif \con{\alpha \land \beta} \gamma$.

Assume $\mcpiif \con{\alpha \land \beta} \gamma$. Then, $\MM,\upf{\CC}{(\alpha \land \beta)}{i},\pi',i \not \Vdash \gamma$ for some $\pi' \in \AT{\upf{\CC}{(\alpha \land \beta)}{i}}$. Note $\pi' \in \AT{\upf{(\upf{\CC}{\alpha}{i})}{\beta}{i}}$. Then, $\MM,\upf{(\upf{\CC}{\alpha}{i})}{\beta}{i},\pi',i \not \Vdash \gamma$. By by Item \ref{item:basic alpha plus beta alpha and beta} in Lemma \ref{lemma:basic}, $\pi' \in \AT{\upf{\CC}{\alpha}{i}}$. Then, $\mcapipif \con{\beta} \gamma$. Then, $\mcpiif \con{\alpha}\con{\beta} \gamma$.

4. Fix a contextualized pointed model $(\MM,\CC,\pi,i)$. Note $\AT{\upf{(\upf{\CC}{\alpha}{i})}{\beta}{i}} = \AT{\upf{\CC}{(\alpha \land \beta)}{i}}$ by Item \ref{item:basic alpha plus beta alpha and beta} in Lemma \ref{lemma:basic}.

Assume $\AT{\upf{\CC}{\alpha}{i}} = \emptyset$. Then, both sides of the equivalence hold at $(\MM,\CC,\pi,i)$ trivially.

Assume $\AT{\upf{\CC}{\alpha}{i}} \neq \emptyset$.

Assume $\mcpiit \con{\alpha}\dcon{\beta} \gamma$. Let $\pi' \in \ecai$. Then, $\mcapipit \dcon{\beta}\gamma$. Then, there is $\pi'' \in \AT{\upf{(\upf{\CC}{\alpha}{i})}{\beta}{i}}$ such that $\MM,\upf{(\upf{\CC}{\alpha}{i})}{\beta}{i},\pi'',i \Vdash \gamma$. Then, $\pi'' \in \AT{\upf{\CC}{(\alpha \land \beta)}{i}}$ and $\MM,\upf{\CC}{(\alpha \land \beta)}{i},\pi'',i \Vdash \gamma$. Then, $\mcpiit \dcon{\alpha \land \beta} \gamma$. Then, $\mcpiit \con{\alpha} \bot \lor \dcon{\alpha \land \beta} \gamma$.

Assume $\mcpiit \con{\alpha} \bot \lor \dcon{\alpha \land \beta} \gamma$. Then, $\MM,\CC,\pi,i \Vdash \dcon{\alpha \land \beta} \gamma$. Then, $\MM,\upf{\CC}{(\alpha \land \beta)}{i},\pi',i \Vdash \gamma$ for some $\pi' \in \AT{\upf{\CC}{(\alpha \land \beta)}{i}}$. Then, $\pi' \in \AT{\upf{(\upf{\CC}{\alpha}{i})}{\beta}{i}}$ and $\MM,\upf{(\upf{\CC}{\alpha}{i})}{\beta}{i},\pi',i \Vdash \gamma$. Let $\pi'' \in \ecai$. Then, $\MM,\upf{\CC}{\alpha}{i},\pi'',i \Vdash \dcon{\beta}\gamma$. Then, $\mcpiit \con{\alpha}\dcon{\beta} \gamma$.
\end{proof}

\setcounter{rfact}{4}

\begin{rfact}
There is an effective function $\mu$ from $\Phi_{\ConXXYY}$ to $\Phi_{\OneBXXYY}$ such that for every $\phi \in \Phi_{\ConXXYY}$, $\models_\ConSHNBT \phi \leftrightarrow \mu(\phi)$.
\end{rfact}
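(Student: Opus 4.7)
I plan to define $\mu$ by structural recursion on $\phi \in \Phi_{\ConXXYY}$, setting $\mu$ to be the identity on atoms and to commute with $\neg$ and $\land$, with the only nontrivial clause $\mu(\con{\beta}\phi_0) = \nu(\beta, \mu(\phi_0))$, where $\nu : \Phi_{\NXXYY} \times \Phi_{\OneBXXYY} \to \Phi_{\OneBXXYY}$ is an auxiliary procedure yielding $\models_{\ConSHNBT} \con{\beta}\psi \leftrightarrow \nu(\beta, \psi)$. Given $\nu$, the global equivalence $\models_{\ConSHNBT} \phi \leftrightarrow \mu(\phi)$ follows by a routine induction, using preservation of validity under replacement of equivalent subformulas.

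To construct $\nu(\beta, \psi)$, I first put $\psi \in \Phi_{\OneBXXYY}$ into disjunctive normal form over its two sorts of atoms---pure-temporal atoms from $\Phi_{\NXXYY}$ and $\Box$-formulas $\Box\gamma$---obtaining $\psi \equiv \bigvee_i (\alpha_i \land \chi_i)$, where each $\alpha_i \in \Phi_{\NXXYY}$ collects the pure-temporal literals of the disjunct and each $\chi_i$ is a conjunction of $\Box$-literals, hence closed. I then refactor over all $2^n$ sign-patterns of the $\chi_i$'s: for $S \subseteq \{1, \dots, n\}$ set $\chi_S = \bigwedge_{i \in S} \chi_i \land \bigwedge_{i \notin S} \neg\chi_i$ and $\alpha_S = \bigvee_{i \in S} \alpha_i$ (with $\alpha_\emptyset = \bot$); the $\chi_S$'s are closed, pairwise inconsistent, jointly exhaustive, and $\psi \equiv \bigvee_S(\chi_S \land \alpha_S)$.

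The central semantic step, which I will state as a separate lemma, is
\[
\models_{\ConSHNBT}\ \con{\beta}\bigvee_S(\chi_S \land \alpha_S) \leftrightarrow \bigvee_S \bigl(\con{\beta}\chi_S \land \con{\beta}\alpha_S\bigr).
\]
Because each $\chi_S$ is closed, its truth at $(\MM, \upf{\CC}{\beta}{i}, \pi', i)$ is independent of $\pi'$; a case split on whether $\AT{\upf{\CC}{\beta}{i}}$ is empty (both sides vacuously true) or nonempty (exactly one $\chi_{S_0}$ is realised at the updated context, so $\psi$ collapses to $\alpha_{S_0}$ there) yields both directions. Given the key step, each summand on the right reduces into $\Phi_{\OneBXXYY}$: $\con{\beta}\alpha_S \equiv \Box(\beta \rightarrow \alpha_S)$ by Item \ref{validity:conditionals box} of Lemma \ref{lemma:partial reduction conditionals valid}; and for the conjunction $\chi_S$ of $\Box$-literals I distribute $\con{\beta}$ over $\land$ via Item \ref{validity:conditionals distribution conjunction}, reducing each $\con{\beta}\Box\gamma$ to $\Box(\beta \rightarrow \gamma)$ by Items \ref{validity:conditionals conditionals} and \ref{validity:conditionals box}, and each $\con{\beta}\neg\Box\gamma = \con{\beta}\dcon{\top}\neg\gamma$ to $\Box\neg\beta \lor \neg\Box(\beta \rightarrow \gamma)$ by Items \ref{validity:conditionals dual conditionals} and \ref{validity:conditionals box}.

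The principal obstacle is the displayed equivalence above: Item \ref{validity:conditionals distribution conditionally distribution} of Lemma \ref{lemma:partial reduction conditionals valid} allows pushing $\con{\beta}$ over a disjunction only when a disjunct is closed, whereas here each disjunct $\chi_S \land \alpha_S$ carries a nontrivial pure-temporal factor $\alpha_S$. Mutual exclusivity of the closed factors is what licenses the stronger distribution, and this does not appear to follow by a purely syntactic chain of the Lemma \ref{lemma:partial reduction conditionals valid} rewrites; I therefore verify it semantically, via the case split on the updated context just sketched, and then feed it into $\nu$. Effectiveness of $\mu$ is then immediate, as DNF conversion, the $2^n$-refinement, and the finitely many algebraic rewrites inside $\nu$ are all computable.
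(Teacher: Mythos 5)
Your construction is correct, but it takes a genuinely different route from the paper's. The paper also works from the inside out on nested occurrences of $\con{\cdot}$, but it normalizes the argument $\psi$ of $\con{\alpha}\psi$ into a \emph{conjunctive} normal form $\chi_1\land\dots\land\chi_n$ in which each clause is a disjunction of one block of $\Phi_{\NXXYY}$-formulas together with finitely many formulas $\con{\gamma_j}\zeta_j$ and $\dcon{\eta_j}\theta_j$, each of which is closed; distributing $\con{\alpha}$ over the conjunction by Item \ref{validity:conditionals distribution conjunction} and then peeling the closed disjuncts off one at a time by Item \ref{validity:conditionals distribution conditionally distribution} stays entirely within the validities of Lemma \ref{lemma:partial reduction conditionals valid}, after which Items \ref{validity:conditionals conditionals}--\ref{validity:conditionals box} eliminate the nesting. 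Your choice of a \emph{disjunctive} normal form is what creates your ``principal obstacle'': each disjunct carries a non-closed factor $\alpha_i$, so Item \ref{validity:conditionals distribution conditionally distribution} no longer applies directly, and you compensate with the $2^n$ sign-pattern refinement and a new, semantically proved distribution law resting on the mutual exclusivity and joint exhaustiveness of the closed formulas $\chi_S$. That law is true --- your case split on whether $\AT{\upf{\CC}{\beta}{i}}$ is empty, together with Fact \ref{fact:closed formulas}, does establish it --- so your argument goes through; it simply requires an extra semantic lemma that the paper's CNF-based rewriting avoids by using only the five listed equivalences. Two small points to tidy up: $\chi_S$ is not literally a conjunction of $\Box$-literals, since the conjuncts $\neg\chi_i$ for $i\notin S$ are disjunctions of negated $\Box$-literals, so reducing $\con{\beta}\chi_S$ also needs Item \ref{validity:conditionals distribution conditionally distribution} (legitimately, as all disjuncts there are closed) and not just Item \ref{validity:conditionals distribution conjunction}; and your recursion clause $\mu(\con{\beta}\phi_0)=\nu(\beta,\mu(\phi_0))$ should explicitly invoke the fact that replacing $\phi_0$ by the valid-equivalent $\mu(\phi_0)$ under $\con{\beta}$ preserves validity of the equivalence.
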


\begin{proof}
~

Define the modal depth of formulas of $\Phi_{\ConSHNBT}$ with respect to $\con{\cdot}$ in the usual way. Fix a formula $\phi$ in $\Phi_{\ConXXYY}$.

Repeat the following steps until we cannot proceed.

\bi
\item Pick a sub-formula $\con{\alpha} \psi$ of $\phi$ whose modal depth with respect to $\con{\cdot}$ is $2$, if $\phi$ has such a sub-formula.
\item Transform $\psi$ to $\chi_1 \land \dots \land \chi_n$, where all $\chi_i$ is in the form of $(\beta_1 \lor \dots \lor \beta_k) \lor (\con{\gamma_1} \zeta_1 \lor \dots \lor \con{\gamma_l} \zeta_l) \lor (\dcon{\eta_1} \theta_1 \lor \dots \lor \dcon{\eta_m} \theta_m)$, where all $\beta_i, \gamma_i, \zeta_i, \eta_i$ and $\theta_i$ are in $\Phi_{\NXXYY}$.
\item Note $\models_{\ConSHNBT} \con{\alpha} \psi \leftrightarrow (\con{\alpha}\chi_1 \land \dots \land \con{\alpha}\chi_n)$ by Item \ref{validity:conditionals distribution conjunction} in Lemma \ref{lemma:partial reduction conditionals valid}. Repeat the following steps until we cannot proceed:
\bi
\item From $\con{\alpha}\chi_1 \land \dots \land \con{\alpha}\chi_n$, pick a conjunct $\con{\alpha} \chi_i = \con{\alpha} ((\beta_1 \lor \dots \lor \beta_k) \lor (\con{\gamma_1} \zeta_1 \lor \dots \lor \con{\gamma_l} \zeta_l) \lor (\dcon{\eta_1} \theta_1 \lor \dots \lor \dcon{\eta_m} \theta_m))$.
\item By Item \ref{validity:conditionals distribution conditionally distribution} in Lemma \ref{lemma:partial reduction conditionals valid}, $\con{\alpha} \chi_i \leftrightarrow \xi$ is valid, where $\xi = \con{\alpha}(\beta_1 \lor \dots \lor \beta_k) \lor (\con{\alpha}\con{\gamma_1} \zeta_1 \lor \dots \lor \con{\alpha}\con{\gamma_l} \zeta_l) \lor (\con{\alpha}\dcon{\eta_1} \theta_1 \lor \dots \lor \con{\alpha}\dcon{\eta_m} \theta_m)$. In the ways specified by Items \ref{validity:conditionals conditionals} and \ref{validity:conditionals dual conditionals} in Lemma \ref{lemma:partial reduction conditionals valid}, transform $\xi$ to $\xi'$, whose modal depth with respect to $\con{\cdot}$ is $1$.
\item Replace $\con{\alpha} \chi_i$ by $\xi'$ in $\con{\alpha}\chi_1 \land \dots \land \con{\alpha}\chi_n$.
\ei
\ei

\noindent Let $\phi'$ be the result. It is easy to see that $\phi'$ contains no nested conditionals and $\models_{\ConSHNBT} \phi \leftrightarrow \phi'$.

By Item \ref{validity:conditionals box} in Lemme \ref{lemma:partial reduction conditionals valid}, we can get a formula $\phi''$ in $\Phi_{\OneConXXYY}$ such that $\models_{\ConSHNBT} \phi' \leftrightarrow \phi''$.

Define $\mu(\phi)$ as $\phi''$. Then, $\models_{\ConSHNBT} \phi \leftrightarrow \mu(\phi)$. We can see that the procedure of getting $\mu(\phi)$ is effective.
\end{proof}

\section{Proofs for axiomatization}
\label{sec:Proofs for axiomatization}

\newcommand{\HX}{\mathsf{H}}
\newcommand{\IX}{\mathsf{I}}
\newcommand{\LX}{\mathsf{L}}

\newcommand{\DJX}[1]{\mathbf{DJ}(#1)}

\newcommand{\CF}{\textrm{CF}}

\newcommand{\fcf}{\Box \HX \land \Diamond \IX_1 \land \dots \land \Diamond \IX_i \land \LX}

\newcommand{\BS}{\textrm{BS}}

\newcommand{\fbs}{(\HX \land \IX_1, \dots, \HX \land \IX_i, \HX \land \LX)}

\newcommand{\AS}{\textrm{AS}}

\newcommand{\fas}{(\HX'_1 \land \IX'_1, \dots, \HX'_i \land \IX'_i, \HX'' \land \LX')}

\newcommand{\NN}{\mathtt{N}}

\subsection{Soundness and completeness of an axiomatic system $\OneBXXYY$ in the language $\Phi_{\OneBXXYY}$}
\label{subsec:Soundness and completeness of an axiomatic system OneBXXYY in the language Phi OneBXXYY}

\begin{definition}[Axiomatic system $\OneBXXYY$] \label{def:Axiomatic system OneBXXYY}

Define an axiomatic system $\OneBXXYY$ as follows:

\medskip

\noindent Axioms:

\begin{enumerate}
\item Axioms for the Propositional Logic
\item Axioms for $\XXX$: $\neg \XXX \neg \top$ \label{AXXYY axiom:X dual top}
\item Axioms for $\YYY$: $\Diamond \YYY \bot \rightarrow (\Diamond \alpha \rightarrow \alpha)$, where $\alpha \in \Phi_{\PL}$ \label{AXXYY axiom:if Y bot}
\item Axioms for $\Box$:
\begin{enumerate}
\item $\Box (\alpha \rightarrow \beta) \rightarrow (\Box \alpha \rightarrow \Box \beta)$, where $\alpha$ and $\beta$ are in $\Phi_{\NXXYY}$ \label{AXXYY axiom:A K}
\item $\Box \alpha \rightarrow \alpha$, where $\alpha \in \Phi_{\NXXYY}$ \label{AXXYY axiom:A T}
\end{enumerate}
\end{enumerate}

\noindent Inference rules:

\begin{enumerate}
\item Modus ponens: from $\phi$ and $\phi \rightarrow \psi$, we can get $\psi$;
\item Generalization of $\XXX$: from $\phi$, we can get $\XXX \phi$;
\item Generalization of $\YYY$: from $\phi$, we can get $\YYY \phi$;
\item Generalization of $\Box$: from $\alpha$, we can get $\Box \alpha$, where $\alpha \in \Phi_{\XY}$.
\end{enumerate}
\end{definition}

\noindent We use $\vdash_{\OneBXXYY} \phi$ to indicate that $\phi$ is \defs{derivable} in this system.

We call $p, \neg p, \bot$ and $\top$ literals. Some auxiliary notations follow:

\bi

\item

For every $\phi \in \Phi_{\NXXYY}$, we use $\DJX{\phi}$ to indicate the set of disjuncts of the disjunctive normal form of $\phi$. Note the elements of $\DJX{\phi}$ are in the form of $\XXX^{h_1} l_1 \land \dots \land \XXX^{h_n} l_n \land \YYY^{j_1} l'_1 \land \dots \land \YYY^{j_m} l'_m$, where all $l_x$ and $l'_x$ are literals.

\item

A formula $\CF$ is called a \emph{core formula} if it is in the form of $\fcf$, where $\HX, \IX_1, \dots, \IX_i$, and $\LX$ are all in $\Phi_{\NXXYY}$.

\item

Fix a core formula $\CF = \fcf$. The following sequence $\BS$ is called the \emph{basic sequence} for $\CF$: $\fbs$. The following sequence $\AS$ is called an \emph{atomic sequence} for $\CF$: $\fas$, where $\HX'_1, \dots, \HX'_i, \HX'' \in \DJX{\HX}$, $\IX'_1 \in \DJX{\IX_1}$, \dots, $\IX'_i \in \DJX{\IX_i}$, and $\LX' \in \DJX{\LX}$.

\item Let $\delta = (\phi_1, \dots, \phi_n)$ be a finite nonempty sequence of formulas in $\Phi_{\NXXYY}$. We use $\bigwedge \Diamond \delta$ to indicate the formula $\Diamond \phi_1 \land \dots \land \Diamond \phi_n$.

\item Let $\Theta$ be a finite set of finite non-empty sequences of formulas in $\Phi_{\NXXYY}$. We use $\bigvee \bigwedge \Diamond \Theta$ to indicate the formula $\bigvee \{\bigwedge \Diamond \delta \mid \delta \in \Theta\}$.

\ei

\begin{lemma} \label{lemma:from consistency to consistency}
Let $\CF = \fcf$ be a core formula. Then, if $\CF$ is consistent, there is an atomic sequence $\AS = \fas$ for $\CF$ such that (1) all its elements are consistent and (2) if some element of it implies $\YYY \bot$\footnote{This means $\vdash_{\OneBXXYY} \psi \rightarrow \YYY \bot$ for some element $\psi$ of $\AS$.}, then $l_1 \land \dots \land l_n$ is consistent, where $l_1, \dots, l_n$ are all the literals that are a conjunct of some element of $\AS$.
\end{lemma}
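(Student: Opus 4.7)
My plan is to obtain the atomic sequence by pushing $\CF$ into a disjunctive normal form of $\Box$--$\Diamond$ combinations and reading off a consistent disjunct; clause (2) then follows automatically via axiom~\ref{AXXYY axiom:if Y bot} applied to the pure literal part of each element.

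First I would derive from the consistency of $\CF = \fcf$ that
\[
F_0 \;:=\; \Box \HX \,\land\, \bigwedge_{j=1}^{i} \Diamond(\HX \land \IX_j) \,\land\, (\HX \land \LX)
\]
is consistent. The diamond conjuncts use the derivable schema $\Box \alpha \land \Diamond \beta \vdash \Diamond(\alpha \land \beta)$ for $\alpha, \beta \in \Phi_{\NXXYY}$, which follows by contraposition from axioms \ref{AXXYY axiom:A K} and \ref{AXXYY axiom:A T}; the final conjunct uses T alone. Next I would put each $\HX \land \IX_j$ and $\HX \land \LX$ into disjunctive normal form, producing disjunctions over the atomic conjunctions $\HX' \land \IX'$ and $\HX' \land \LX'$ with $\HX' \in \DJX{\HX}$, $\IX' \in \DJX{\IX_j}$, $\LX' \in \DJX{\LX}$. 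Using the derivable equivalence $\Diamond(\alpha \lor \beta) \leftrightarrow \Diamond \alpha \lor \Diamond \beta$ together with propositional distributivity, $F_0$ becomes a disjunction each of whose disjuncts has the form
\[
F \;:=\; \Box \HX \,\land\, \bigwedge_{j=1}^{i} \Diamond(\HX'_j \land \IX'_j) \,\land\, (\HX'' \land \LX')
\]
for some atomic sequence $\AS = \fas$. Consistency of $F_0$ forces some such $F$ to be consistent; fix it. Clause (1) is then immediate: $\HX'_j \land \IX'_j$ is consistent because if it were not, $\Box$-generalization would make $\neg \Diamond(\HX'_j \land \IX'_j)$ derivable, contradicting $F$; and $\HX'' \land \LX'$ is a direct conjunct of $F$.

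For clause (2), assume some $\phi_k \in \AS$ satisfies $\vdash \phi_k \rightarrow \YYY \bot$. I first establish $F \vdash \Diamond \YYY \bot$ by cases. If $\phi_k = \HX'_k \land \IX'_k$ for some $k \leq i$, contraposition followed by $\Box$-generalization of $\neg \YYY \bot \rightarrow \neg \phi_k$ (a formula in $\Phi_{\NXXYY} \subseteq \Phi_{\XY}$) and axiom \ref{AXXYY axiom:A K} give $\vdash \Diamond \phi_k \rightarrow \Diamond \YYY \bot$, and $\Diamond \phi_k$ is a conjunct of $F$. If $\phi_k = \HX'' \land \LX'$, then $F \vdash \YYY \bot$ directly, and instantiating axiom \ref{AXXYY axiom:A T} at $\neg \YYY \bot \in \Phi_{\NXXYY}$ and contraposing yields $\YYY \bot \rightarrow \Diamond \YYY \bot$. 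Now, for each element $\phi_k$ let $\beta_k \in \Phi_{\PL}$ be the conjunction of its literal conjuncts (those without any $\XXX^h$ or $\YYY^j$ prefix), so $\vdash \phi_k \rightarrow \beta_k$. For $k \leq i$, the same derivation pattern gives $\vdash \Diamond \phi_k \rightarrow \Diamond \beta_k$, hence $F \vdash \Diamond \beta_k$; since $\beta_k \in \Phi_{\PL}$, axiom \ref{AXXYY axiom:if Y bot} combined with $F \vdash \Diamond \YYY \bot$ yields $F \vdash \beta_k$. For $k = i+1$, $F \vdash \HX'' \land \LX' \vdash \beta_{i+1}$ directly. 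Thus $F \vdash \beta_1 \land \dots \land \beta_{i+1}$, which is precisely $l_1 \land \dots \land l_n$, and consistency of $F$ transfers to consistency of this literal conjunction.

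The main obstacle, though not deep, is the bookkeeping for the derivation $F \vdash \Diamond \YYY \bot$ — which must work both when the $\YYY\bot$-forcing element is one of the $\Diamond$-conjuncts of $F$ and when it is the bare final conjunct — combined with the restriction $\alpha \in \Phi_{\PL}$ in axiom~\ref{AXXYY axiom:if Y bot}, which is what dictates separating each element's literal part $\beta_k$ from its temporal conjuncts before the axiom can be invoked.
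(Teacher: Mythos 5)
Your proof is correct and follows essentially the same route as the paper's: reduce the consistency of $\CF$ to the consistency of a conjunction of $\Diamond$-formulas over the basic sequence, distribute disjunctive normal forms to extract a consistent atomic sequence, and then invoke the axiom $\Diamond \YYY \bot \rightarrow (\Diamond \alpha \rightarrow \alpha)$ on the literal parts to get clause (2). The only differences are cosmetic (you keep $\Box \HX$ and the bare final conjunct where the paper wraps every element of the sequence in $\Diamond$, and you spell out the bookkeeping for $F \vdash \Diamond \YYY \bot$ that the paper leaves implicit).
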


\begin{proof}
Assume $\CF$ is consistent. Let $\BS = \fbs$ be its basic sequence. We can easily show $\vdash_{\OneBXXYY} \CF \rightarrow \bigwedge \Diamond \BS$. Then, $\bigwedge \Diamond \BS$ is consistent. Note $\vdash_{\OneBXXYY} \bigwedge \Diamond \BS \leftrightarrow \bigvee \bigwedge \Diamond \{\AS \mid \text{$\AS$ is an atomic sequence for $\CF$}\}$. Then, there is an atomic sequence $\AS = \fas$ for $\CF$ such that $\bigwedge \Diamond \AS$ is consistent. Then, all the elements of $\AS$ are consistent. Assume some element of $\AS$ implies $\YYY \bot$. Let $l_1, \dots, l_n$ be all the literals that are a conjunct of some element of $\AS$. By Axiom \ref{AXXYY axiom:if Y bot}, we can get $\vdash_{\OneBXXYY} \bigwedge \Diamond \AS \rightarrow (l_1 \land \dots \land l_n)$. Then, $l_1 \land \dots \land l_n$ is consistent.
\end{proof}

We say that a contextualized pointed model $(\MM,\CC,\pi,i)$ is \emph{linear} if the domain of $\MM$ just consists of the elements of $\pi$.

\begin{lemma} \label{lemma:the bridge}
Let $\phi = \XXX^{h_1} l_1 \land \dots \land \XXX^{h_n} l_n \land \YYY^{j_1} l'_1 \land \dots \land \YYY^{j_m} l'_m$, where all $l_x$ and $l'_x$ are literals. Then, if $\phi$ is consistent, it is satisfiable at a linear contextualized pointed model.
\end{lemma}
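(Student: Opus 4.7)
My plan is to build a linear contextualized pointed model $(\MM, \emptyset, \pi, i)$ directly from the conjuncts of $\phi$, choosing the evaluation instant $i$ so as to neutralize any $\YYY^j \bot$ conjuncts while letting the remaining conjuncts be satisfied by a pointwise valuation along a single timeline.

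First I would classify each conjunct of $\phi$ by its \emph{offset} relative to the evaluation instant: the conjunct $\XXX^{h_x} l_x$ constrains the state $\pi[i + h_x]$, while $\YYY^{j_y} l'_y$ constrains $\pi[i - j_y]$ when $i \geq j_y$ and is vacuously true when $i < j_y$. From $\OneBXXYY$-consistency of $\phi$ I would extract two preliminary facts: no conjunct is $\bot$ itself, and no conjunct has the form $\XXX^h \bot$ for $h \geq 1$, since the axiom $\neg \XXX \neg \top$ (equivalently $\neg \XXX \bot$) together with $\XXX$-generalization yields $\neg \XXX^h \bot$.

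Next I would choose $i$. Let $B = \{j \geq 1 : \YYY^j \bot \text{ is a conjunct of } \phi\}$. If $B \neq \emptyset$, pick any $i$ with $0 \leq i < \min B$; every conjunct $\YYY^j \bot$ then becomes vacuously true, while each remaining past-reference conjunct $\YYY^j l'$ is either active (when $j \leq i$) or vacuously true (when $j > i$). If $B = \emptyset$, pick $i = \max\{j_1, \ldots, j_m\}$ (or $0$ if there are no $\YYY$-conjuncts). Then I would build an infinite chain $w_0 < w_1 < \cdots$, set $\pi[k] = w_k$, and define $V$ at each $w_k$ from the active demands at instant $k$: put $w_k \in V(p)$ when $p$ is demanded there, $w_k \notin V(p)$ when $\neg p$ is demanded there, and make an arbitrary choice on unreferenced atoms. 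Taking $\CC = \emptyset$ gives $\AT{\CC} = \{\pi\}$, making $(\MM, \emptyset, \pi, i)$ a linear contextualized pointed model, and a straightforward induction on the form of the conjuncts then verifies $\MM, \emptyset, \pi, i \Vdash \phi$.

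The main obstacle is showing that this valuation is well-defined, i.e., no instant $k$ sees both $p$ and $\neg p$ demanded. This is exactly where $\OneBXXYY$-consistency of $\phi$ is used: a clash at a positive offset $h$ would come from conjuncts $\XXX^h p$ and $\XXX^h \neg p$—the latter being $\neg \XXX^h p$ at the atomic level of $\Phi_{\NXXYY}$—which is already a propositional contradiction; a clash at a negative offset $-j$ with $j \leq i$ arises analogously from $\YYY^j p$ and $\YYY^j \neg p$; and a clash at the present instant combines these. The delicate bookkeeping is to check that rendering all $\YYY^j \bot$ vacuous through the choice of $i$ does not leave any residually active $\YYY^j l'$ conjunct whose constraint had been tacitly relied upon, which the case split on $B$ handles cleanly.
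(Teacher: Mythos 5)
Your construction is correct and follows essentially the same route as the paper's proof: classify the conjuncts by temporal offset, build a single infinite chain with the empty context, read the valuation off the resulting pointwise demands, and use consistency together with the seriality axiom $\neg \XXX \neg \top$ to exclude $\XXX^h \bot$ conjuncts and clashes at a given instant. The only divergence is that the paper simply evaluates at instant $0$, where every $\YYY^{j} l'$ with $j \geq 1$ is vacuously satisfied at the root so that no past demand ever has to be realized, whereas you choose the instant $i$ adaptively and realize the active past demands; both succeed, though the instant-$0$ normalization is what Lemma \ref{lemma:from satisfiability to satisfiability} later relies on when it merges these linear models at their roots.
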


\begin{proof}
Assume $\phi$ is consistent. It is easy to see that there is a formula $\phi'$ such that $\vdash_{\OneBXXYY} \phi \leftrightarrow \phi'$ and $\phi'$ is in the form of $\XXX^{a_1} \phi_1 \land \dots \land \XXX^{a_n} \phi_n \land \YYY^{b_1} \psi_1 \land \dots \land \YYY^{b_m} \psi_m$, where $0 \leq a_1 < \dots < a_n$, $0 < b_1 < \dots < b_m$, and $\phi_1, \dots, \phi_n, \psi_1, \dots, \psi_m$ are conjunctions of literals. By Axiom \ref{AXXYY axiom:X dual top}, $\phi_1, \dots, \phi_n$ are all consistent. Then, we can easily define a linear contextualized pointed model $(\MM,\CC,\pi,0)$ such that $\MM,\CC,\pi,0 \Vdash \XXX^{a_1} \phi_1 \land \dots \land \XXX^{a_n} \phi_n \land \YYY^{b_1} \psi_1 \land \dots \land \YYY^{b_m} \psi_m$. We skip the details. Then, it can be easily shown $\MM,\CC,\pi,0 \Vdash \phi$.
\end{proof}

\begin{lemma} \label{lemma:from satisfiability to satisfiability}
Let $\CF = \fcf$ be a core formula. Then, $\CF$ is satisfiable if there is an atomic sequence $\AS = \fas$ for $\CF$ such that (1) all its elements are satisfiable and (2) if some element of it implies $\YYY \bot$\footnote{This means $\models_{\OneBXXYY} \psi \rightarrow \YYY \bot$ for some element $\psi$ of $\AS$.}, then $l_1 \land \dots \land l_n$ is satisfiable, where $l_1, \dots, l_n$ are all the literals that are a conjunct of some element of $\AS$.
\end{lemma}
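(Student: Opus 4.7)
The strategy is to glue linear witness models for the individual elements of $\AS$ into a single branching tree that shares only its root. Specifically, Lemma~\ref{lemma:the bridge} supplies, for each element $\HX'_k \land \IX'_k$ and for $\HX'' \land \LX'$, a linear contextualized pointed model satisfying that element. I would combine these witnesses by taking a fresh root $r$ with exactly $i+1$ immediate successors and extending each successor into an infinite linear chain that mirrors the corresponding witness from instant~$1$ onward. The context $\CC$ is then chosen so that $\AT{\CC}$ is exactly the set of the $i+1$ timelines so obtained, e.g.\ as the singleton rule $\{\{\pi_1,\dots,\pi_{i+1}\}\}$.

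Once the tree is assembled, verification of $\CF$ at the evaluation point $(\pi_{i+1}, t)$ proceeds by checking each conjunct in turn: $\Box \HX$ holds because every acceptable timeline $\pi_j$ satisfies a disjunct $\HX'_j \in \DJX{\HX}$ of the DNF of $\HX$ and hence $\HX$ itself; each $\Diamond \IX_k$ is witnessed by the $k$-th timeline $\pi_k$, which realizes $\IX'_k \in \DJX{\IX_k}$; and $\LX$ holds on the main timeline $\pi_{i+1}$ because it realizes $\LX' \in \DJX{\LX}$.

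The real obstacle is past consistency: a conjunct $\YYY^j l'$ occurring in some element reaches back $j$ instants on its own timeline, and when that lookback meets the shared root it imposes a constraint on $V(r)$ that may conflict with constraints from other elements. I would handle this by splitting on whether some element of $\AS$ implies $\YYY \bot$. If no element does, then each element is satisfiable at arbitrarily large instants (prepend fresh past states to any linear witness, which preserves conjuncts of the prescribed shape), so I would choose the evaluation instant $t$ strictly larger than the maximum depth of any $\YYY$-conjunct appearing in any element; every past lookback then stays strictly above the root, and $V(r)$ may be set arbitrarily. If some element does imply $\YYY \bot$, then every witness must sit at instant $t = 0$, so all $\YYY^j$-conjuncts with $j \geq 1$ are vacuously true, the $\XXX^h$-conjuncts constrain disjoint future branches independently, and the pure-literal conjuncts occurring across the elements are jointly realizable at $r$ precisely by condition~(2), which I would use to define $V(r)$.
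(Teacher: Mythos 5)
Your overall architecture is the paper's: obtain linear witnesses for the elements of $\AS$ via Lemma~\ref{lemma:the bridge}, glue them into a tree that shares only its root, and split on whether some element implies $\YYY \bot$; in that branch your use of condition~(2) to fix the valuation of the shared root is exactly what the paper does. The gap is in the other branch. Your claim that an element not implying $\YYY \bot$ is ``satisfiable at arbitrarily large instants'' is false: a conjunct $\YYY^{j} \bot$ with $j \geq 2$ is true exactly at the instants $0, \dots, j-1$, so it does \emph{not} imply $\YYY \bot$ (which holds only at instant $0$), yet it bounds the evaluation instant from above and is destroyed by prepending past states. Concretely, take the core formula $\Box \top \land \Diamond \YYY \YYY \bot \land \top$. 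Its atomic sequence has all elements satisfiable and none implying $\YYY \bot$, so your recipe applies and places the evaluation instant at some $t \geq 3$; but $\YYY \YYY \bot$ is false at every instant $\geq 2$ on every timeline, so $\Diamond \YYY \YYY \bot$ fails in your model even though the formula is satisfiable (at instants $0$ and $1$). The same obstruction arises from conjunct pairs such as $\YYY^{j} p \land \YYY^{j} \neg p$ with $j \geq 2$, which are equivalent to $\YYY^{j} \bot$.

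The moral is that the evaluation instant must be kept \emph{low}, not pushed high: deep past lookbacks have to stay vacuous by falling off the beginning of the timeline. That is what the paper does --- it evaluates at the shared root when some element forces $\YYY \bot$ (using condition~(2) for the root's valuation, as you do), and otherwise prepends a single fresh root beneath all the witnesses and keeps the evaluation point at, or immediately above, that root, so that every lookback of depth $\geq 2$ remains vacuously true. Your instinct that raising the instant decouples the $\YYY$-constraints of different elements is understandable (they would then land on already-diverged branches rather than on shared past states), but it is incompatible with $\YYY^{j} \bot$-conjuncts; to repair your argument you would have to show that the instant you choose is one at which every element is still individually satisfiable \emph{and} that the constraints the elements then impose on the shared states below the divergence point are jointly realizable. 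As written, that step is missing and the construction fails on the example above.
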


\begin{proof}
~

\bi

\item

Assume there is an atomic sequence $\AS = \fas$ for $\CF$ such that (1) all its elements are satisfiable, and (2) some element of it implies $\YYY \bot$ and $l_1 \land \dots \land l_n$ is satisfiable, where $l_1, \dots, l_n$ are all the literals that are a conjunct of some element of $\AS$.

\item

By Lemma \ref{lemma:the bridge}, we can get that there are pairwise disjoint linear contextualized pointed models $(\MM_1,\epsilon,\pi_1,0)$, $\dots$, $(\MM_i,\epsilon,\pi_i,0)$, $(\MM',\epsilon,\pi',0)$, $(\MM'',\epsilon,\pi'',0)$ which respectively satisfy the elements of $\AS$ and $l_1 \land \dots \land l_n$.

\item

Let $w$ be the root of $\MM''$. Let $\MM$ be the model constructed in the following way. Replace the roots of $\MM_1, \dots, \MM_i$, $\MM'$ by $w$. Let $\NN_1, \dots, \NN_i, \NN'$ be the resulted models and $\lambda_1, \dots, \lambda_i, \lambda'$ be the timelines of these models. Merge the resulted models in an imagined way.

\item

It can be verified that the elements of $\AS$ are respectively true at the following contextualized pointed models: $(\MM,\epsilon,\lambda_1,0), \dots, (\MM,\epsilon,\lambda_i,0), (\MM,\epsilon,\lambda',0)$.

\item

Let $\BS = \fbs$ be the basic sequence for $\CF$. Then, all its elements are respectively true at these contextualized pointed models. It can be verified $\MM,\epsilon,\lambda',0 \Vdash \CF$.

\item

Assume there is an atomic sequence $\AS = \fas$ for $\CF$ such that (1) all its elements are satisfiable and (2) no element of it implies $\YYY \bot$.

\item

By Lemma \ref{lemma:the bridge}, we can get that there are pairwise disjoint linear contextualized pointed models $(\MM_1,\epsilon,\pi_1,0)$, $\dots$, $(\MM_i,\epsilon,\pi_i,0)$, $(\MM',\epsilon,\pi',0)$ which respectively satisfy the elements of $\AS$.

\item

Let $w$ be a new state for these models. Let $\MM$ be the resulted model by merging $w$ and these models in an imagined way. Let $\lambda_1, \dots, \lambda_i, \lambda'$ be the respective results of prefixing $\pi_1, \dots, \pi_i, \pi'$ with $w$.

\item

It can be verified that the elements of $\AS$ are respectively true at the contextualized pointed models $(\MM,\epsilon,\lambda_1,0)$, $\dots$, $(\MM,\epsilon,\lambda_i,0)$, and $(\MM,\epsilon,\lambda',0)$.

\item

Let $\BS = \fbs$ be the basic sequence for $\CF$. Then, its elements are respectively true at these contextualized pointed models. It can be verified $\MM,\epsilon,\lambda',0 \Vdash \CF$.

\ei
\end{proof}

\begin{theorem}
The axiomatic system $\OneBXXYY$ is sound and complete with respect to the set of valid formulas of $\Phi_{\OneBXXYY}$.
\end{theorem}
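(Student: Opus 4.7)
Soundness is a routine induction: each axiom is verified directly at an arbitrary contextualized pointed model. Axioms \ref{AXXYY axiom:A K} and \ref{AXXYY axiom:A T} are immediate since $\Box$ is a universal quantifier over $\AT{\CC}$. Axiom \ref{AXXYY axiom:X dual top} reflects the seriality of $<$. The delicate Axiom \ref{AXXYY axiom:if Y bot} holds because $\YYY \bot$ can be true at $(\MM, \CC, \pi, i)$ only when $i = 0$, and at the root every acceptable timeline agrees with $\pi$ on propositional values. The inference rules preserve validity in the standard way.

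For completeness, I will show that every $\OneBXXYY$-consistent $\phi \in \Phi_{\OneBXXYY}$ is satisfiable. The plan is to reduce $\phi$ derivably to a disjunction of core formulas, pick a consistent disjunct, and hand it to the machinery developed in Lemmas \ref{lemma:from consistency to consistency}, \ref{lemma:the bridge}, and \ref{lemma:from satisfiability to satisfiability}. Concretely: any $\phi \in \Phi_{\OneBXXYY}$ is a Boolean combination of atoms that are either in $\Phi_{\NXXYY}$ or of the form $\Box \beta$ with $\beta \in \Phi_{\NXXYY}$. Putting $\phi$ into disjunctive normal form, in each disjunct the $\Phi_{\NXXYY}$-literals conjoin into a single $\LX \in \Phi_{\NXXYY}$; the positive $\Box$-atoms $\Box \beta_1, \dots, \Box \beta_k$ conjoin via Axiom \ref{AXXYY axiom:A K} into a single $\Box \HX$ with $\HX = \beta_1 \land \dots \land \beta_k$; and the negated $\Box$-atoms become $\Diamond \IX_1, \dots, \Diamond \IX_i$ by definition of $\Diamond$. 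Thus each disjunct is derivably equivalent to a core formula $\fcf$, and since $\phi$ is consistent, at least one such $\CF$ is consistent.

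Now Lemma \ref{lemma:from consistency to consistency} supplies an atomic sequence $\AS = \fas$ for $\CF$ every element of which is consistent, with the additional guarantee that if some element implies $\YYY \bot$ then the conjunction of all literals appearing across $\AS$ is consistent. Each element of $\AS$ is a conjunction of $\XXX$- and $\YYY$-prefixed literals, so Lemma \ref{lemma:the bridge} makes each one satisfiable at a linear contextualized pointed model; the literal-conjunction in the special case is satisfiable for the same reason. Feeding these satisfiers into Lemma \ref{lemma:from satisfiability to satisfiability} yields a contextualized pointed model realizing $\CF$, and hence $\phi$.

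The main obstacle is the reduction to core formulas: one must verify that arbitrary conjunctions of positive $\Box$-atoms can be merged into a single $\Box \HX$ inside the calculus. This uses Axiom \ref{AXXYY axiom:A K} together with generalization for $\Box$, and works precisely because $\Box$ never nests in $\Phi_{\OneBXXYY}$, so all operands remain inside $\Phi_{\NXXYY}$ where the restricted K axiom applies. With the normalization in hand, the rest of the argument is essentially bookkeeping delegated to the three supporting lemmas, the real content of which lives in the gluing construction of Lemma \ref{lemma:from satisfiability to satisfiability}.
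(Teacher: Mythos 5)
Your proposal is correct and follows essentially the same route as the paper's own proof: soundness is checked axiom by axiom (your observation that Axiom \ref{AXXYY axiom:if Y bot} is sound because $\YYY\bot$ forces the instant to be $0$, where all acceptable timelines share the root, is exactly the intended justification), and completeness proceeds by normalizing a consistent formula into a consistent core formula $\fcf$ via disjunctive normal form and merging of $\Box$-conjuncts, then invoking Lemmas \ref{lemma:from consistency to consistency}, \ref{lemma:the bridge} and \ref{lemma:from satisfiability to satisfiability}. The only detail the paper makes explicit that you gloss over is the padding of a disjunct with $\Box\top$, $\Diamond\top$ and $\top$ so that every component of the core-formula shape is present, which is harmless bookkeeping.
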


\begin{proof}
The soundness of $\OneBXXYY$ can be easily verified and we skip it. Let $\phi$ be a consistent formula in $\Phi_{\OneBXXYY}$. Let $\phi'$ be a formula in $\Phi_{\OneBXXYY}$ such that $\vdash_{\OneBXXYY} \phi \leftrightarrow \phi'$ and $\phi'$ is in the disjunctive normal form. Then, some disjunct $\psi$ of $\phi'$ is consistent, where $\psi$ is in the form of $\Box \HX_1 \land \dots \land \Box \HX_h \land \Diamond \IX_1 \land \dots \land \Diamond \IX_i \land \LX$, where all $\HX_x, \IX_x$ and $\LX$ are in $\Phi_{\NXXYY}$. Note $\vdash_{\OneBXXYY} \Box \top$ and $\vdash_{\OneBXXYY} \top$. By Axiom \ref{AXXYY axiom:A T}, $\vdash_{\OneBXXYY} \Diamond \top$. Then, we can assume that some $\HX_x$, some $\IX_x$ and $\LX$ exist. Let $\HX = \HX_1 \land \dots \land \HX_h$. Let $\psi' = \Box \HX \land \Diamond \IX_1 \land \dots \land \Diamond \IX_i \land \LX$. It can be shown $\vdash_{\OneBXXYY} \psi \leftrightarrow \psi'$. Then, $\psi'$ is consistent. Note $\psi'$ is a core formula. By Lemmas \ref{lemma:from consistency to consistency}, \ref{lemma:the bridge} and \ref{lemma:from satisfiability to satisfiability}, $\psi'$ is satisfiable. By soundness of $\OneBXXYY$, $\phi$ is satisfiable. Then, $\OneBXXYY$ is complete.
\end{proof}

\subsection{Soundness and completeness of $\ConSHNBT$}
\label{subsec:Soundness and completeness of ConSONBT}

Let $\kappa$ be the function from $\Phi_{\ConSHNBT}$ to $\Phi_{\ConXXYY}$ mentioned in Fact \ref{fact:reduction 1 semantics}. The following result can be easily shown:

\begin{lemma}[] \label{lemma:reduction 1 system}
For every $\phi \in \Phi_{\ConSHNBT}$, $\vdash_{\ConSHNBT} \phi \leftrightarrow \kappa(\phi)$.
\end{lemma}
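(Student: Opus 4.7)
The plan is to mirror, at the syntactic/derivational level, exactly the rewriting procedure that (implicitly) defines the function $\kappa$ in Fact~\ref{fact:reduction 1 semantics}. Recall that $\kappa$ is obtained by repeatedly pushing $\XXX$ and $\YYY$ past the boolean connectives and past $\con{\cdot}$, using the validities listed in Lemma~\ref{lemma:X and Y pass through everything}, until every occurrence of $\XXX$ or $\YYY$ appears only as a prefix $\XXX^n$ or $\YYY^n$ on an atomic proposition or on $\bot$. Each such pushing step is an instance of one of the equivalences
\[
\XXX \neg \phi \leftrightarrow \neg \XXX \phi, \quad \XXX(\phi \land \psi) \leftrightarrow (\XXX\phi \land \XXX\psi), \quad \XXX \YYY \phi \leftrightarrow \phi, \quad \XXX \con{\alpha}\phi \leftrightarrow \con{\XXX\alpha}\XXX\phi,
\]
together with the corresponding $\YYY$-versions (Axioms~\ref{E axiom:X not}--\ref{E axiom:X con} and \ref{E axiom:Y not}--\ref{E axiom:Y con} of the system $\ConSHNBT$). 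The crucial observation is that each of these equivalences appears as an axiom of $\ConSHNBT$, so each rewriting step is derivable as a biconditional, not merely semantically valid.

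My proof will proceed by induction on the number of rewriting steps needed to reduce $\phi$ to $\kappa(\phi)$. For the base case, if $\phi$ is already in $\Phi_{\ConXXYY}$, then $\kappa(\phi) = \phi$ and $\vdash_{\ConSHNBT} \phi \leftrightarrow \phi$ is trivial. For the inductive step, suppose $\phi$ is not yet in $\Phi_{\ConXXYY}$; then it must contain a subformula of the form $\XXX \psi$ or $\YYY \psi$ where $\psi$ is not a literal of the target shape, and one of the eight axioms above applies to rewrite this subformula into an equivalent one $\psi'$ with strictly fewer rewriting steps remaining until the $\XXX$/$\YYY$ are driven all the way in. By the relevant axiom, $\vdash_{\ConSHNBT} \psi \leftrightarrow \psi'$, and by the replacement-of-equivalent-subformulas rule, $\vdash_{\ConSHNBT} \phi \leftrightarrow \phi^*$, where $\phi^*$ is the result of the substitution. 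By induction, $\vdash_{\ConSHNBT} \phi^* \leftrightarrow \kappa(\phi^*)$, and since $\kappa$ acts deterministically on its argument we have $\kappa(\phi^*) = \kappa(\phi)$. Chaining the two derivations yields $\vdash_{\ConSHNBT} \phi \leftrightarrow \kappa(\phi)$.

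The main obstacle, if any, is a purely bookkeeping one: ensuring that the rewriting procedure defining $\kappa$ terminates and is well-defined regardless of the order in which pushing steps are applied. Termination is guaranteed because each step either strictly decreases the scope of some $\XXX$ or $\YYY$ (pushing it past a boolean connective or past $\con{\cdot}$), or eliminates an $\XXX\YYY$ / $\YYY\XXX$ pair (with the latter accounted for by the $\YYY\bot$ disjunct, which itself is already a literal of the target shape). Determinism up to provable equivalence follows from the fact that all rewriting rules in play are axioms of $\ConSHNBT$, so any two normal forms obtained are each equivalent to $\phi$ and hence equivalent to each other. This gives us the syntactic analogue of the semantic argument for Fact~\ref{fact:reduction 1 semantics} and completes the proof.
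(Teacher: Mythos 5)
Your proof is correct and is exactly the argument the paper has in mind when it says this lemma ``can be easily shown'': the equivalences defining $\kappa$ (Lemma~\ref{lemma:X and Y pass through everything}) are all axioms of $\ConSHNBT$, so the semantic rewriting lifts to a derivation via the replacement-of-equivalent-subformulas rule. The only loose point is the claim $\kappa(\phi^*) = \kappa(\phi)$, which is cleanest if you stipulate that the rewriting step chosen in the inductive step is the one $\kappa$'s own algorithm would apply first; this is pure bookkeeping and does not affect correctness.
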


Let $\mu$ be the function from $\Phi_{\ConXXYY}$ to $\Phi_{\OneBXXYY}$ mentioned in Fact \ref{fact:reduction 2 semantics}. The following result can be shown similarly to how Fact \ref{fact:reduction 2 semantics} is shown.

\begin{lemma} \label{lemma:reduction 2 system}
For every $\phi \in \Phi_{\ConXXYY}$, $\vdash_{\ConSHNBT} \phi \leftrightarrow \mu(\phi)$.
\end{lemma}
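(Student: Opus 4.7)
The plan is to mirror, step by step, the construction of $\mu$ given in the proof of Fact \ref{fact:reduction 2 semantics}, but to replace each appeal to a semantic equivalence from Lemma \ref{lemma:partial reduction conditionals valid} with an appeal to the corresponding axiom of $\ConSHNBT$, using the rule of replacement of equivalent subformulas to propagate the rewrites through arbitrary contexts. Concretely, I would proceed by induction on the modal depth of $\phi$ with respect to $\con{\cdot}$. In the base case $\phi$ contains no $\con{\cdot}$, so $\mu(\phi) = \phi$ and the equivalence is trivial.

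For the inductive step, following the algorithm that defines $\mu$, I would pick a subformula $\con{\alpha}\psi$ of modal depth $2$ with respect to $\con{\cdot}$. Using propositional reasoning inside the system, I first derive $\vdash_{\ConSHNBT} \psi \leftrightarrow (\chi_1 \land \cdots \land \chi_n)$, where each $\chi_i$ is a disjunction of a formula in $\Phi_{\NXXYY}$, some $\con{\gamma_j}\zeta_j$'s, and some $\dcon{\eta_k}\theta_k$'s, all with $\gamma_j, \zeta_j, \eta_k, \theta_k \in \Phi_{\NXXYY}$. By replacement I get $\vdash_{\ConSHNBT} \con{\alpha}\psi \leftrightarrow \con{\alpha}(\chi_1 \land \cdots \land \chi_n)$. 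Axiom \ref{E axiom:con and} (under replacement) distributes $\con{\alpha}$ over the conjunction, and Axiom \ref{E axiom:con or} distributes it over each disjunction, since the nested $\con{\gamma_j}\zeta_j$ and $\dcon{\eta_k}\theta_k$ are closed formulas. Axioms \ref{E axiom:con con} and \ref{E axiom:con dual con} then collapse each $\con{\alpha}\con{\gamma_j}\zeta_j$ and $\con{\alpha}\dcon{\eta_k}\theta_k$ into a formula of modal depth $1$. Iterating shrinks the maximal modal depth, so that the inductive hypothesis, applied to the intermediate formula, finishes the job. A last pass using Axiom \ref{axiom:conditionals box} eliminates every remaining unnested $\con{\alpha}\beta$ in favour of $\Box(\alpha \to \beta)$, delivering a derivation of $\phi \leftrightarrow \mu(\phi)$.

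The main obstacle I anticipate is bookkeeping: I need to be sure that every rewriting step taken by the semantic algorithm for $\mu$ has an exact syntactic counterpart among the axioms of $\ConSHNBT$, and that replacement of equivalent subformulas is legitimate at each such step (it is, by an inference rule). A subtle point worth double-checking is the mismatch between the semantic equivalence in item \ref{r validity:conditionals dual conditionals} of Lemma \ref{lemma:partial reduction conditionals valid}, which has an extra disjunct $\con{\alpha}\bot$, and Axiom \ref{E axiom:con dual con}, which does not. If the algorithm for $\mu$ really uses the version with the disjunct, I would need a short auxiliary derivation showing $\vdash_{\ConSHNBT} \con{\alpha}\dcon{\beta}\gamma \leftrightarrow (\con{\alpha}\bot \lor \dcon{\alpha \land \beta}\gamma)$, by case-splitting on $\con{\alpha}\bot$ using propositional reasoning together with Axiom \ref{E axiom:con dual con}. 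Once this auxiliary lemma is in place, the rest is the syntactic shadow of the semantic argument already given for Fact \ref{fact:reduction 2 semantics}.
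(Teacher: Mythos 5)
Your proposal is correct and is precisely the argument the paper intends: the paper's own ``proof'' of this lemma is just the remark that it can be shown similarly to Fact~\ref{fact:reduction 2 semantics}, i.e., by replaying the $\mu$-algorithm syntactically, replacing each semantic appeal to Lemma~\ref{lemma:partial reduction conditionals valid} with the corresponding axiom under the replacement rule. Your flagged subtlety about Axiom~\ref{E axiom:con dual con} lacking the $\con{\alpha}\bot$ disjunct present in the validated equivalence is a genuine point, and the auxiliary derivation you sketch goes through, since $\con{\alpha}\bot \rightarrow \con{\alpha}\dcon{\beta}\gamma$ is obtainable from Axiom~\ref{E axiom:con and} together with the propositional equivalence $(\bot \land \dcon{\beta}\gamma) \leftrightarrow \bot$ and replacement.
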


\setcounter{rtheorem}{0}

\begin{rtheorem}
The axiomatic system $\ConSHNBT$ is sound and complete with respect to the set of valid formulas of $\Phi_{\ConSHNBT}$.
\end{rtheorem}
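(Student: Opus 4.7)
The plan is to bootstrap from the soundness and completeness of $\OneBXXYY$ established just above, using the two reduction lemmas. By Lemmas \ref{lemma:reduction 1 system} and \ref{lemma:reduction 2 system}, for every $\phi \in \Phi_{\ConSHNBT}$ one already has $\vdash_{\ConSHNBT} \phi \leftrightarrow \mu(\kappa(\phi))$, with $\mu(\kappa(\phi)) \in \Phi_{\OneBXXYY}$. I would therefore reduce the completeness of $\ConSHNBT$ to the task of deriving $\mu(\kappa(\phi))$ inside $\ConSHNBT$ whenever it is valid, and soundness to a one-by-one check of the axioms and rules of $\ConSHNBT$.

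For soundness I would go through each schema: the propositional axioms and modus ponens are standard; the $\XXX$- and $\YYY$-axioms 2a--2d and 3a--3d are exactly the validities recorded in Lemma \ref{lemma:X and Y pass through everything}; axiom 2e follows from the seriality of $<$, and axiom 3e from the fact that every timeline starts at the common root $r$, so that at instant $0$ the value of any propositional atom is independent of the chosen timeline; the partial-reduction axioms 4a--4e are the validities of Lemma \ref{lemma:partial reduction conditionals valid}; axiom 5 holds because the selected timeline $\pi$ always lies in $\AT{\CC}$ by definition of a contextualized pointed model. The generalization rules for $\XXX$, $\YYY$, $\Box$ and the replacement-of-equivalent-subformulas rule preserve validity by routine inductions.

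For completeness, I would assume $\models_{\ConSHNBT} \phi$. By Facts \ref{fact:reduction 1 semantics} and \ref{fact:reduction 2 semantics}, $\mu(\kappa(\phi)) \in \Phi_{\OneBXXYY}$ is also valid, so by the completeness theorem for $\OneBXXYY$ proved in the previous subsection, $\vdash_{\OneBXXYY} \mu(\kappa(\phi))$. The crucial step is to lift this derivation into $\ConSHNBT$ by checking that every $\OneBXXYY$-axiom is a $\ConSHNBT$-theorem and every $\OneBXXYY$-rule is admissible in $\ConSHNBT$. The propositional axioms, axiom 2e, axiom 3e, the $T$-axiom for $\Box$, and all four inference rules appear verbatim in $\ConSHNBT$. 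The only nontrivial case is the $K$-axiom $\Box(\alpha \rightarrow \beta) \rightarrow (\Box \alpha \rightarrow \Box \beta)$ for $\alpha,\beta \in \Phi_{\NXXYY}$, which I would derive as follows: instantiating axiom 4a with $\top$ gives $\vdash_{\ConSHNBT} \Box(\phi \land \psi) \leftrightarrow (\Box \phi \land \Box \psi)$; combining this with the replacement rule yields monotonicity of $\Box$ on $\Phi_{\XY}$ (from $\vdash \chi \rightarrow \chi'$ deduce $\vdash \Box \chi \leftrightarrow \Box(\chi \land \chi')$ and hence $\vdash \Box \chi \rightarrow \Box \chi'$); applying monotonicity to the PL tautology $(\alpha \rightarrow \beta) \land \alpha \rightarrow \beta$ together with $\Box$-distribution then delivers $K$. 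Concluding $\vdash_{\ConSHNBT} \mu(\kappa(\phi))$ and combining with Lemmas \ref{lemma:reduction 1 system} and \ref{lemma:reduction 2 system} finally gives $\vdash_{\ConSHNBT} \phi$.

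The hard part will be this lifting step, specifically the recovery of the $K$-axiom of $\OneBXXYY$ inside $\ConSHNBT$, since $\ConSHNBT$ does not postulate the normal modal axioms for $\Box$ directly --- everything has to be threaded through the partial-reduction axioms 4a--4e and the replacement rule. Once that derivation is in hand, the rest is a routine translation.
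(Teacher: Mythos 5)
Your proposal is correct and follows essentially the same route as the paper: check soundness axiom-by-axiom, then for completeness pass a valid $\phi$ through $\kappa$ and $\mu$ into $\Phi_{\OneBXXYY}$, invoke the completeness of $\OneBXXYY$, lift the derivation into $\ConSHNBT$, and return via Lemmas \ref{lemma:reduction 1 system} and \ref{lemma:reduction 2 system}. Your extra care over the lifting step is in fact warranted: the paper simply asserts that $\OneBXXYY$ is contained in $\ConSHNBT$, whereas the $K$-axiom \ref{AXXYY axiom:A K} of $\OneBXXYY$ is not literally an axiom of $\ConSHNBT$ and must be derived from Axiom \ref{E axiom:con and} (instantiated with $\top$) together with the replacement rule, exactly as you sketch.
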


\begin{proof}
The soundness of $\ConSHNBT$ can be easily verified and we skip it. Let $\phi \in \Phi_{\ConSHNBT}$ such that $\models_{\ConSHNBT} \phi$. By Facts \ref{fact:reduction 1 semantics} and \ref{fact:reduction 2 semantics}, $\models_{\ConSHNBT} \mu(\kappa(\phi))$. Note $\mu(\kappa(\phi)) \in \Phi_{\OneBXXYY}$. By completeness of $\OneBXXYY$, $\vdash_{\OneBXXYY} \mu(\kappa(\phi))$. Note $\OneBXXYY$ is contained in $\ConSHNBT$. Then, $\vdash_{\ConSHNBT} \mu(\kappa(\phi))$. By Lemmas \ref{lemma:reduction 1 system} and \ref{lemma:reduction 2 system}, $\vdash_{\ConSHNBT} \phi$. Then, $\ConSHNBT$ is complete.
\end{proof}

\end{document}